  \newtheorem{theorem}{Theorem}
  \newtheorem{lemma}[theorem]{Lemma}
    \newtheorem{corollary}[theorem]{Corollary}
  \newtheorem{definition}{Definition}
\begin{document}
%

\newcommand{\tudu}[1]{{\bf{TODO: #1}}}
\newcommand{\transname}[2]{$#1 \slash #2$-{\sc{transversal}}}
\newcommand{\himmersion}{{$\mathbf{H}$\sc{-Immersion}}\xspace}
\newcommand{\dpath}{{\sc{Edge Disjoint Paths}}\xspace}
\newcommand{\vertexpath}{{\sc{Vertex Disjoint Paths}}\xspace}
\newcommand{\fast}{{\sc{Feedback Arc Set}}\xspace}
\newcommand{\olaf}{{\sc{Optimal Linear Arrangement}}\xspace}
\newcommand{\SAT}{{\sc{SAT}}\xspace}
\newcommand{\compass}{\ensuremath{\textrm{NP} \subseteq \textrm{coNP}/\textrm{poly}}}
\newcommand{\N}{\ensuremath{\mathbb{N}}}
\newcommand{\Aa}{\ensuremath{\mathcal{A}}}
\newcommand{\Bb}{\ensuremath{\mathcal{B}}}
\newcommand{\Jj}{\ensuremath{\mathcal{J}}}
\newcommand{\Pp}{\ensuremath{\mathcal{P}}}
\newcommand{\Qq}{\ensuremath{\mathcal{Q}}}
\newcommand{\Ss}{\ensuremath{\mathcal{S}}}
\newcommand{\app}{$\star$}

\newcommand{\prt}{\mathcal{N}}
\newcommand{\ola}{{\textsc{OLA}}\xspace}
\newcommand{\fas}{{\textsc{FAS}}\xspace}
\newcommand{\fastt}{{\textsc{FAST}}\xspace}
\newcommand{\ctw}{{\mathbf{ctw}}}

\newcommand{\defproblemu}[3]{
  \vspace{1mm}
\noindent\fbox{
  \begin{minipage}{0.95\textwidth}
  #1 \\
  {\bf{Input:}} #2  \\
  {\bf{Question:}} #3
  \end{minipage}
  }
  \vspace{1mm}
}
\newcommand{\defparproblem}[4]{
  \vspace{3mm}
\noindent\fbox{
  \begin{minipage}{0.96\textwidth}
  \begin{tabular*}{\textwidth}{@{\extracolsep{\fill}}lr} #1  & {\bf{Parameter:}} #3 \\ \end{tabular*}
  {\bf{Input:}} #2  \\
  {\bf{Question:}} #4
  \end{minipage}
  }
  \vspace{3mm}
}

\title{Subexponential parameterized algorithm for computing the~cutwidth of a semi-complete digraph}

\author{Fedor V. Fomin
\thanks{Department of Informatics, University of Bergen,  Bergen, Norway,  \{fomin,michal.pilipczuk\}@ii.uib.no. Supported by 
  the European Research Council (ERC) via grant Rigorous Theory of Preprocessing, reference 267959.} \and  Micha\l{} Pilipczuk~$^\dagger$
}

%


\maketitle

\begin{abstract}
Cutwidth of a digraph is a width measure introduced by Chudnovsky, Fradkin, and Seymour~\cite{ChudnovskyFS2011} in connection with development of a structural theory for tournaments, or more generally, for semi-complete digraphs. In this paper we provide an algorithm with running time $2^{O(\sqrt{k\log k})}\cdot n^{O(1)}$ that tests whether the cutwidth of a given $n$-vertex semi-complete digraph is at most $k$, improving upon the currently fastest algorithm of the second author~\cite{ja} that works in $2^{O(k)}\cdot n^2$ time. As a byproduct, we obtain a new algorithm for \fast in tournaments (\fastt) with running time $2^{c\sqrt{k}}\cdot n^{O(1)}$, where $c=\frac{2\pi}{\sqrt{3}\cdot \ln 2}\leq 5.24$, that is simpler than the algorithms of Feige~\cite{Feige09} and of Karpinski and~Schudy\cite{KarpinskiS10}, both also working in $2^{O(\sqrt{k})}\cdot n^{O(1)}$ time. Our techniques
can be applied also to other layout problems on semi-complete digraphs. We show that the \olaf problem, a close relative of \fast, can be solved in $2^{O(k^{1/3}\cdot\sqrt{\log k})}\cdot n^{O(1)}$ time, where $k$ is the target cost of the ordering.
\end{abstract}

\section{Introduction}\label{sec:intro}

A directed graph is {\emph{simple} if it contains no multiple arcs or loops; it is moreover {\emph{semi-complete}} if for every two vertices $v,w$, at least one of the arcs $(v,w)$ or $(w,v)$ is present. An important subclass of semi-complete digraphs is the class of {\emph{tournaments}}, where we require that exactly one of these arcs is present. Tournaments are extensively studied both from combinatorial and computational point of view; see the book of Bang-Jensen  and Gutin \cite{BangG089_book} for an overview. 

One reason why the class of semi-complete digraphs is so interesting, is that for this class it is possible to construct a structural theory, resembling the theory of minors for undirected graphs. This theory has been developed recently by Chudnovsky, Fradkin, Kim, Scott, and Seymour~\cite{ChudnovskyFS2011,ChudnovskySS2011,ChudnovskyS11,fradkin-seymourEDP,fradkin-seymour,kim-seymour-minors}. In particular, two natural notions of digraph containment, namely immersion and minor orders, have been proven to well-quasi-order the set of semi-complete digraphs~\cite{ChudnovskyS11,kim-seymour-minors}. The developed structural theory has many algorithmic consequences, including fixed-parameter tractable algorithms for containment testing problems~\cite{ChudnovskyFS2011,my,ja}.

In both theories, for undirected graphs and semi-complete digraphs, width parameters play crucial roles. While in the theory of undirected graph minors the main parameter is treewidth, for semi-complete digraphs {\emph{cutwidth}} becomes one of the key notions. Given a semi-complete digraph $T$ and a vertex ordering $\sigma=(v_1,v_2,\ldots,v_n)$ of $V(T)$, the {\emph{width}} of $\sigma$ is $\max_{1\leq t\leq n-1} |E(\{v_{t+1},v_{t+2},\ldots,v_n\},\{v_{1},v_{2},\ldots,v_t\})|$, i.e., the maximum number of arcs that are directed from a suffix of the ordering to the complementary prefix. The {\emph{cutwidth}} of $T$, denoted $\ctw(T)$, is   the smallest possible width of an ordering of $V(T)$. It turns out that excluding a fixed digraph as an immersion implies an upper bound on cutwidth of a semi-complete digraph~\cite{ChudnovskyFS2011}. Hence, the claim that the immersion relation is a well-quasi-ordering of semi-complete digraphs can be easily reduced to the case of semi-complete digraphs of bounded cutwidth; there, a direct reasoning can be applied~\cite{ChudnovskyS11}.

From the computational point of view, Chudnovsky, Fradkin, and Seymour give an approximation algorithm that, given a semi-complete digraph $T$ on $n$ vertices and an integer $k$, in time $O(n^3)$ either outputs an ordering of width $O(k^2)$ or concludes that $\ctw(T)>k$~\cite{ChudnovskyFS2011}. It also follows from the work of Chudnovsky, Fradkin, and Seymour~\cite{ChudnovskyFS2011,ChudnovskySS2011} that the value of cutwidth can be computed exactly by a non-uniform fixed-parameter algorithm working in $f(k)\cdot n^3$ time: as the class of semi-complete digraphs of cutwidth at most $k$ is characterized by a finite set of forbidden immersions, we can approximate cutwidth and test existence of any of them using dynamic programming on the approximate ordering. These results were further improved by the second author~\cite{ja}: he gives an $O(OPT)$-approximation in $O(n^2)$ time by proving that any ordering of $V(T)$ according to outdegrees has width at most $O(\ctw(T)^2)$, and a fixed-parameter algorithm that in $2^{O(k)}\cdot n^2$ time finds an ordering of width at most $k$ or concludes that it is not possible.

\paragraph{Our results and techniques.} In this work we present an algorithm that, given a semi-complete digraph $T$ on $n$ vertices and an integer $k$, 
in $2^{O(\sqrt{k\log k})}\cdot n^{O(1)}$ time computes an ordering of width at most $k$ or concludes that $\ctw(T)>k$. In other words, we prove that the cutwidth of a semi-complete digraph can be computed in subexponential parameterized time.

The idea behind  our approach is inspired by the  recent work of a superset of the current authors on clustering problems~\cite{clustering}. The algorithm in~\cite{clustering} is based on a combinatorial result that in every YES instance of the problem the number of $k$-cuts, i.e., partitions of the vertex set into two subsets with at most $k$ edges crossing the partition, is bounded by a subexponential function of $k$. We apply a similar strategy to compute the cutwidth of a semi-complete digraph.  A {\emph{$k$-cut}} in a semi-complete digraph $T$ is a partition of its vertices into two sets $X$ and $Y$, such that only at most $k$ arcs are directed from $Y$ to $X$. Our algorithm is based on a new combinatorial lemma that the number of $k$-cuts in a semi-complete digraph of cutwidth at most $k$ is at most $2^{O(\sqrt{k\log k})}\cdot n$. The crucial ingredient of its proof is to relate $k$-cuts of a transitive tournament to partition numbers: a notion extensively studied in classical combinatorics and which   subexponential asymptotics is very well understood. Then we roughly do the following. It is possible to show that all $k$-cuts can be enumerated with polynomial time delay.  We enumerate all $k$-cuts and if   we exceed the combinatorial bound,  we are able to say that the cutwidth of the input digraph is more than $k$. Otherwise, we  have a bounded-size family of objects on which we can employ a dynamic programming routine. The running time of this step is up to a polynomial factor proportional to the number of $k$-cuts, and thus is subexponential. 


As a byproduct of the approach taken, we also obtain a new algorithm for \fast (\fas) in semi-complete digraphs, with running time $2^{c\sqrt{k}}\cdot n^{O(1)}$ for $c=\frac{2\pi}{\sqrt{3}\cdot \ln 2}\leq 5.24$. The \fas problem was  the first problem in tournaments shown to admit subexponential parameterized algorithms. The first algorithm with running time $2^{O(\sqrt{k}\log{k})}\cdot n^{O(1)}$ is due to Alon, Lokshtanov, and Saurabh~\cite{AlonLS09}. This has been further improved by Feige~\cite{Feige09} and by Karpinski and Schudy~\cite{KarpinskiS10}, who have independently shown two different algorithms with running time $2^{O(\sqrt{k})}\cdot n^{O(1)}$. The algorithm of Alon et al. introduced a new technique called~\emph{chromatic coding} that proved to be useful also in other problems in dense graphs \cite{ghosh2012faster}. The algorithms of Feige and of Karpinski and Schudy were based on the degree ordering approach, and the techniques developed there were more contrived to the problem.

In our approach, the $2^{O(\sqrt{k})}\cdot n^{O(1)}$ algorithm for \fas on semi-complete digraphs follows immediately from relating $k$-cuts of a transitive tournament to partition numbers, and an application of the general framework. It is also worth mentioning that the explicit constant in the exponent obtained using our approach is much smaller than the constants in the algorithms of Feige and of Karpinski and Schudy; however, optimizing these constants was not the purpose of these works. Similarly to the algorithm of Karpinski and Schudy, our algorithm works also in the weighted setting.

Lastly, we show that our approach can be also applied to other layout problems in semi-complete digraphs. For example, we consider a natural variant of the well-studied \olaf problem  \cite{ChinnCDG82,DiazPS02}, and we prove that one can compute in $2^{O(k^{1/3}\cdot\sqrt{\log k})}\cdot n^{O(1)}$ time an ordering of cost at most $k$, or conclude that it is impossible (see Section~\ref{sec:preliminaries} for precise definitions). Although such a low complexity may be explained by the fact that the optimal cost may be even cubic in the number of vertices, we find it interesting that results of this kind can be also obtained by making use of  our techniques.

\paragraph{Organization of the paper.} In Section~\ref{sec:preliminaries} we introduce basic notions and problem definitions. In Section~\ref{sec:partitions} we prove combinatorial lemmata concerning $k$-cuts of semi-complete digraphs. In Section~\ref{sec:algorithms} we apply the observations from the previous section to obtain the algorithmic results. Section~\ref{sec:conclusions} is devoted to concluding remarks.

\section{Preliminaries}\label{sec:preliminaries}

We use standard graph notation. For a digraph $D$, we denote by $V(D)$ and $E(D)$ the vertex and edge sets of $D$, respectively. A digraph is {\emph{simple}} if it has no loops and no multiple arcs, i.e., for every pair of vertices $v,w$, the arc $(v,w)$ appears in $E(D)$ at most once. Note that we do not exclude existence of arcs $(v,w)$ and $(w,v)$ at the same time. All the digraphs considered in this paper will be simple

A digraph is {\emph{acyclic}} if it contains no cycle. It is known that a digraph is acyclic if and only if it admits a {\emph{topological ordering}} of vertices, i.e., an ordering $(v_1,v_2,\ldots,v_n)$ of $V(D)$ such that arcs are always directed from a vertex with a smaller index to a vertex with a larger index. 

A simple digraph $T$ is {\emph{semi-complete}} if for every pair $(v,w)$ of vertices {\bf{at least}} one of the arcs $(v,w)$, $(w,v)$ is present. A semi-complete digraph $T$ is moreover a {\emph{tournament}} if for every pair $(v,w)$ of vertices {\bf{exactly}} one of the arcs $(v,w)$, $(w,v)$ is present. A {\emph{transitive tournament}} with ordering $(v_1,v_2,\ldots,v_n)$ is a tournament $T$ defined on vertex set $\{v_1,v_2,\ldots,v_n\}$ where $(v_i,v_j)\in E(T)$ if and only if $i<j$.

We use Iverson notation: for a condition $\varphi$, $[\varphi]$ denotes value $1$ if $\varphi$ is true and $0$ otherwise. We also use $\exp(t)=e^t$.

\subsection{{\textsc{Feedback Arc Set}}}

\begin{definition}
Let $T$ be a digraph. A subset $F\subseteq E(T)$ is called a {\emph{feedback arc set}} if $T\setminus F$ is acyclic.
\end{definition}

The \fas problem in semi-complete digraphs is defined as follows.

\defparproblem{{\textsc{Feedback Arc Set}}}{A semi-complete digraph $T$, an integer $k$}{$k$}{Is there a feedback arc set of $T$ of size at most $k$?}

We have the following easy observation that enables us to view \fas as a graph layout problem.

\begin{lemma}\label{lem:fast-order}
Let $T$ be a digraph. Then $T$ admits a feedback arc set of size at most $k$ if and only if there exists an ordering $(v_1,v_2,\ldots,v_n)$ of $V(T)$ such that at most $k$ arcs of $E(T)$ are directed backward in this ordering, i.e., of form $(v_i,v_j)$ for $i>j$.
\end{lemma}
\begin{proof}
If $F$ is a feedback arc set in $T$ then the ordering can be obtained by taking any topological ordering of $T\setminus F$. On the other hand, given the ordering we may simply define $F$ to be the set of backward edges.
\end{proof}

\subsection{{\textsc{Cutwidth}}}

\begin{definition}
Let $T$ be a digraph. For an ordering $\sigma=(v_1,v_2,\ldots,v_n)$ of $V(T)$, the {\emph{width}} of $\sigma$ is $\max_{1\leq t\leq n-1} |E(\{v_{t+1},v_{t+2},\ldots,v_n\},\{v_{1},v_{2},\ldots,v_t\})|$. The {\emph{cutwidth}} of $T$, denoted $\ctw(T)$, is equal to the smallest possible width of an ordering of $V(T)$.
\end{definition}

The {\textsc{Cutwidth}} problem can be hence defined as follows:

\defparproblem{{\textsc{Cutwidth}}}{A semi-complete digraph $T$, an integer $k$}{$k$}{Is $\ctw(T)\leq k$?}

Note that the introduced notion reverses the ordering with respect to the standard literature on cutwidth~\cite{ChudnovskyFS2011,ja}. We chose to do so to be consistent within the paper, and compatible with the literature on \fas{} in tournaments. 

\subsection{{\textsc{Optimal Linear Arrangement}}}

\begin{definition}
Let $T$ be a digraph and $(v_1,v_2,\ldots,v_n)$ be an ordering of its vertices. Then the {\emph{cost}} of this ordering is defined as
$$\sum_{(v_i,v_j)\in E(T)} (i-j)\cdot [i>j],$$
that is, every arc directed backwards in the ordering contributes to the cost with the distance between the endpoints in the ordering.
\end{definition}

Whenever the ordering is clear from the context, we also refer to the contribution of a given arc to its cost as to the {\emph{length}} of this arc. By a simple reordering of the computation we obtain the following:

\begin{lemma}\label{lem:reordering}
For a digraph $T$ and ordering $(v_1,v_2,\ldots,v_n)$ of $V(T)$, the cost of this ordering is equal to:
$$\sum_{t=1}^{n-1} |E(\{v_{t+1},v_{t+2},\ldots,v_n\},\{v_1,v_2,\ldots,v_t\})|.$$
\end{lemma}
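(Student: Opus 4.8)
The plan is to prove this identity by double counting: both sides count, with multiplicity, pairs consisting of an arc and a cut position that the arc crosses, and I will simply exchange the order of summation. Concretely, I would rewrite the right-hand side as a sum over pairs $(e,t)$ where $e \in E(T)$ is an arc contributing to the $t$-th cut, and then regroup this sum by the arc $e$ rather than by the position $t$.

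First I would determine, for a fixed arc $(v_i,v_j) \in E(T)$, exactly which positions $t \in \{1,2,\ldots,n-1\}$ have the property that $(v_i,v_j)$ is directed from the suffix $\{v_{t+1},\ldots,v_n\}$ to the prefix $\{v_1,\ldots,v_t\}$. This happens if and only if the head $v_j$ lies in the prefix and the tail $v_i$ lies in the suffix, i.e., if and only if $j \leq t$ and $t < i$. If $i \leq j$ (in a simple digraph $i=j$ cannot occur, so this means $i<j$), there is no such $t$, so the arc is never counted; if $i > j$, the valid positions are exactly $t \in \{j, j+1, \ldots, i-1\}$, a set of cardinality $i-j$. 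This matches the indicator $[i>j]$ and the weight $i-j$ appearing in the definition of the cost.

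Exchanging the two summations then gives
\[
\sum_{t=1}^{n-1} |E(\{v_{t+1},\ldots,v_n\},\{v_1,\ldots,v_t\})|
= \sum_{(v_i,v_j)\in E(T)} \bigl|\{\, t : j \leq t < i \,\}\bigr|
= \sum_{(v_i,v_j)\in E(T)} (i-j)\cdot[i>j],
\]
which is precisely the cost of the ordering. There is no genuine obstacle in this argument — the interchange of two finite sums needs no justification — and the only point requiring a little care is the off-by-one bookkeeping that makes $\{j, j+1, \ldots, i-1\}$ have exactly $i-j$ elements.
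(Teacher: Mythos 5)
Your proof is correct and is essentially identical to the paper's: both rewrite $(i-j)\cdot[i>j]$ as $\sum_{t=1}^{n-1}[j\leq t<i]$ and interchange the two finite sums. The off-by-one bookkeeping you flag is handled correctly.
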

\begin{proof}
Observe that
\begin{eqnarray*}
\sum_{(v_i,v_j)\in E(T)} (i-j)\cdot [i>j] & = & \sum_{(v_i,v_j)\in E(T)}\ \sum_{t=1}^{n-1} [j\leq t<i] \\
& = & \sum_{t=1}^{n-1}\ \sum_{(v_i,v_j)\in E(T)} [j\leq t<i] \\ 
& = & \sum_{t=1}^{n-1} |E(\{v_{t+1},v_{t+2},\ldots,v_n\},\{v_1,v_2,\ldots,v_t\})|.
\end{eqnarray*}
\end{proof}

The problem \ola{} ({\textsc{Optimal Linear Arrangement}}) in semi-complete digraphs is defined as follows:

\defparproblem{{\textsc{Optimal Linear Arrangement}}}{A semi-complete digraph $T$, an integer $k$}{$k$}{Is there an ordering of $V(T)$ of cost at most $k$?}

\section{$k$-cuts of semi-complete digraphs}\label{sec:partitions}

In this section we provide all the relevant observations on $k$-cuts of semi-complete digraphs. We start with the definitions, and then proceed to bounding the number of $k$-cuts when the given semi-complete digraph is close to a structured one.

\subsection{Definitions}

\begin{definition}
A $k${\emph{-cut}} of a digraph $T$ is a partition $(X,Y)$ of $V(T)$ with the following property: there are at most $k$ arcs $(u,v)\in E(T)$ such that $u\in Y$ and $v\in X$.
\end{definition}

The following lemma will be needed to apply the general framework.

\begin{lemma}\label{lem:enumeration}
$k$-cuts of a digraph $T$ can be enumerated with polynomial-time delay.
\end{lemma}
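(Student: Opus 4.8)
The plan is to build $k$-cuts greedily, one vertex at a time, via a bounded-branching search tree, and to prune branches that provably cannot be completed. Fix an arbitrary ordering $u_1, u_2, \dots, u_n$ of $V(T)$. A node of the search tree at depth $i$ corresponds to a choice of which side ($X$ or $Y$) each of $u_1, \dots, u_i$ lies on; the leaves at depth $n$ are exactly the $2^n$ partitions, and we want to visit precisely those that are $k$-cuts, spending only polynomial time between consecutive outputs. At an internal node we branch into (at most) two children, placing $u_{i+1}$ into $X$ or into $Y$. The key observation that makes this work is \emph{monotonicity}: once vertices have been assigned to $X$ and $Y$, the number of ``bad'' arcs — arcs $(u,v)$ with $u$ already in $Y$ and $v$ already in $X$ — can only grow as we extend the partial assignment. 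Hence if at some node the count of bad arcs among already-decided vertices already exceeds $k$, the whole subtree can be discarded, as none of its leaves is a $k$-cut.

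The main point to verify is that this pruned tree still has the polynomial-delay property, i.e. that we never spend more than polynomial time walking through dead subtrees before reaching the next valid leaf. For this I would argue that \emph{every} node of the pruned tree can be extended to at least one $k$-cut leaf: from any partial assignment with at most $k$ bad arcs, simply put all remaining undecided vertices into $X$; this creates no new bad arc, since bad arcs only go from $Y$ to $X$ and no new vertex enters $Y$. Therefore every surviving node has a descendant leaf that is output, so the subtree rooted at any surviving node has depth at most $n-i \le n$ and contains at least one output leaf. A standard depth-first traversal of a tree in which every root-to-survivor path has length $\le n$, every node has $\le 2$ children, and every maximal surviving path ends in an output, visits the next output after $O(n)$ node-steps; at each node we update the bad-arc count in $O(n)$ time (only arcs incident to the newly placed vertex $u_{i+1}$ need to be examined), for $O(n^2)$ time per step and hence $O(n^3)$ delay between consecutive $k$-cuts, which is polynomial. (The first output and termination after the last output are handled the same way.)

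The one technical subtlety — and the place I would be most careful — is making the enumeration produce each $k$-cut \emph{exactly once} and handling the boundary cases cleanly: the root (empty assignment, trivially extendable), the possibility that $T$ has no $k$-cut at all (then the root is pruned immediately and we output nothing), and the bookkeeping of the depth-first pointer so that ``polynomial-time delay'' holds uniformly, including before the first and after the last solution. Since each leaf corresponds to a distinct partition and we traverse the tree without repetition, uniqueness is automatic; the delay bound follows from the extendability argument above. This gives the claimed polynomial-time delay enumeration.
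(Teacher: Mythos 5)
Your overall scheme (depth-first search over a binary branching tree with pruning) is the same as the paper's, but your pruning test is too weak, and the extendability claim on which your delay bound rests is false. You prune a node only when the number of bad arcs \emph{among already-decided vertices} exceeds $k$, and you justify polynomial delay by asserting that any surviving node extends to a $k$-cut by placing all undecided vertices into $X$, ``creating no new bad arc since no new vertex enters $Y$.'' This is wrong: a new bad arc can have its tail at an already-decided vertex of $Y$ and its head at a newly added vertex of $X$, so this completion can create arbitrarily many new bad arcs (and symmetrically if you instead dump everything into $Y$). Hence a surviving node need not have any $k$-cut among its descendants, and the delay guarantee collapses. Concretely (the lemma is stated for arbitrary digraphs): take $k=0$ and vertices $a_1,\dots,a_m,z$ with no arcs among the $a_i$ and with both arcs present between $z$ and each $a_i$; order the $a_i$ first. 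All $2^m$ assignments of $a_1,\dots,a_m$ survive your test with zero bad arcs so far, yet only the two trivial partitions $(V,\emptyset)$ and $(\emptyset,V)$ are $0$-cuts, so the search walks through exponentially many dead nodes between the two outputs.

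The paper avoids this by running an \emph{exact} feasibility test at each node: it computes by max-flow the minimum edge cut from $Y$ to $X$ in $T$, which by max-flow/min-cut duality equals $\min\{|E(B,A)| : X\subseteq A,\ Y\subseteq B\}$, the best value achievable by any completion of the partial assignment; a branch is kept iff this value is at most $k$. With that test every surviving node provably has a $k$-cut descendant, which is precisely the property your delay argument needs. Your monotone bad-arc count is a correct lower bound on this min-cut value, so it is a valid (but insufficient) heuristic; replacing it by the max-flow check repairs the proof.
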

\begin{proof}
Let $\sigma=(v_1,v_2,\ldots,v_n)$ be an arbitrary ordering of vertices of $T$. We perform a classical branching strategy: we start with empty $X$ and $Y$, and consider the vertices in order $\sigma$, at each step branching into one of the two possibilities: vertex $v_i$ is to be incorporated into $X$ or into $Y$. However, after assigning each consecutive vertex we run a max-flow algorithm from $Y$ to $X$ to find the size of a minimum edge cut between $Y$ and $X$. If this size is more than $k$, we terminate the branch as we know that it cannot result in any solutions found. Otherwise we proceed. We output a partition after the last vertex, $v_n$, is assigned a side; note that the last max-flow check ensures that the output partition is actually a $k$-cut. Moreover, as during the algorithm we consider only branches that can produce at least one $k$-cut, the next partition will be always found within polynomial waiting time, proportional to the depth of the branching tree times the time needed for computations at each node of the branching tree.
\end{proof}

\subsection{$k$-cuts of a transitive tournament and partition numbers}

For a nonnegative integer $n$, a partition of $n$ is a multiset of positive integers whose sum is equal to $n$. The partition number $p(n)$ is equal to the number of different partitions of $n$. Partition numbers are studied extensively in analytic combinatorics, and there are sharp estimates on their value. In particular, we will use the following:

\begin{lemma}[\cite{erdos,hardy-ramanujan}]\label{lem:hardy-littlewood}
There exists a constant $A$ such that for every nonnegative $k$ it holds that $p(k)\leq \frac{A}{k+1}\cdot \exp(C\sqrt{k})$, where $C=\pi\sqrt{\frac{2}{3}}$.
\end{lemma}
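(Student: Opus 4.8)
This is the classical Hardy--Ramanujan upper bound for the partition function, so the plan is to reprove it by the saddle--point (circle) method.

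\textbf{Setup.} I would work with the generating function $F(x)=\prod_{j\ge 1}(1-x^{j})^{-1}=\sum_{k\ge 0}p(k)x^{k}$, analytic on $|x|<1$. By Cauchy's formula, for every $0<r<1$,
$$p(k)=\frac{r^{-k}}{2\pi}\int_{-\pi}^{\pi}F(re^{i\theta})\,e^{-ik\theta}\,d\theta ,\qquad\text{so}\qquad p(k)\le\frac{r^{-k}}{2\pi}\int_{-\pi}^{\pi}\bigl|F(re^{i\theta})\bigr|\,d\theta .$$
Since we only want an upper bound, the phase $e^{-ik\theta}$ can be dropped and only $|F|$ matters. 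The plan is to take $r=e^{-t}$ with $t$ equal to the saddle value $t^{*}=\pi/\sqrt{6k}$, and to show that the integral is concentrated on a short arc around $\theta=0$.

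\textbf{Behaviour on the positive axis, and where $C$ comes from.} Using $-\log(1-u)=\sum_{m\ge1}u^{m}/m$ and interchanging sums,
$$\log F(e^{-t})=\sum_{j\ge1}-\log(1-e^{-jt})=\sum_{m\ge1}\frac{1}{m(e^{mt}-1)}\le\sum_{m\ge1}\frac{1}{m^{2}t}=\frac{\pi^{2}}{6t},$$
since $e^{u}-1\ge u$. Already $p(k)\le r^{-k}F(r)\le\exp(kt+\tfrac{\pi^{2}}{6t})$, which is minimised at $t=t^{*}$ with value $\exp(C\sqrt{k})$ --- this gives the claim up to the polynomial factor. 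To recover that factor I would replace the crude inequality by the precise behaviour of $F(e^{-t})$ as $t\to0^{+}$, which follows from Euler--Maclaurin summation (equivalently, the modular transformation of the Dedekind $\eta$ function): $F(e^{-t})=\sqrt{t/(2\pi)}\,\exp(\pi^{2}/(6t))\,(1+o(1))$. At $t=t^{*}$ this yields $r^{-k}F(r)=\Theta(k^{-1/4})\exp(C\sqrt{k})$.

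\textbf{Concentration and finish.} The last ingredient is that $|F(re^{i\theta})|$ is genuinely smaller than $F(r)$ unless $\theta$ is extremely close to $0$. Repeating the computation with $re^{i\theta}$ in place of $r$ and keeping track of the factors $1-\cos(m\theta)$, one obtains bounds of the shape $\log|F(re^{i\theta})|\le\log F(r)-c\,\theta^{2}/t^{3}$ for $|\theta|$ small and $\log|F(re^{i\theta})|\le\log F(r)-c/t$ for $|\theta|$ bounded away from $0$ (and from the other small-denominator rationals); it is enough that a positive fraction of the indices $m$ contribute $1-\cos(m\theta)$ bounded below. Then the Gaussian piece dominates and $\int_{-\pi}^{\pi}|F(re^{i\theta})|\,d\theta=O(t^{3/2})\,F(r)=O(k^{-3/4})\,F(r)$. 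Combining with the previous paragraph, $p(k)=O(k^{-3/4}\cdot k^{-1/4})\exp(C\sqrt{k})=O(k^{-1})\exp(C\sqrt{k})$, and since $A$ may be enlarged to cover the finitely many small $k$, this is exactly $p(k)\le\frac{A}{k+1}\exp(C\sqrt{k})$.

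\textbf{Main obstacle.} Everything above is routine real--variable calculus except the concentration (``minor arc'') estimate: a clean quantitative lower bound on $-\log|F(re^{i\theta})|$ that is uniform in $\theta$, in particular near the rationals $\theta=2\pi a/q$ with small $q$. This is precisely the technical heart of the circle method. For the mere upper bound the full singular--series expansion of $p(k)$ is not needed, only the decay of $|F|$ off the major arc, but even this requires applying the $\eta$-transformation at each relevant rational (or replacing the whole argument by Erd\H{o}s's elementary induction, which avoids contour integration at the price of more combinatorics). As the statement is entirely classical, in the paper we simply invoke \cite{erdos,hardy-ramanujan}.
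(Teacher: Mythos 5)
You should first note that the paper contains no proof of this lemma to compare against: it is a classical estimate imported verbatim from the cited works of Erd\H{o}s and of Hardy and Ramanujan, and the authors use it as a black box. Judged on its own, your sketch is a correct outline of the standard saddle-point (circle-method) argument, and the quantitative bookkeeping checks out: the saddle $t^{*}=\pi/\sqrt{6k}$, the value $kt^{*}+\pi^{2}/(6t^{*})=C\sqrt{k}$ with $C=\pi\sqrt{2/3}$, the $\Theta(k^{-1/4})$ factor from the $\eta$-transformation, and the $O(t^{3/2})=O(k^{-3/4})$ major-arc width indeed multiply out to $O(k^{-1})\exp(C\sqrt{k})$. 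The one genuine gap is the one you flag yourself: the minor-arc estimate $\log|F(re^{i\theta})|\le\log F(r)-c/t$ for $\theta$ bounded away from $0$ is asserted rather than proved, and making it uniform in $\theta$ (in particular near rationals $2\pi a/q$ with small $q$) is precisely the technical core of the classical proof, so as written this step is a plan rather than a proof. Two remarks that put the gap in perspective. First, the fully rigorous, elementary part of your first paragraph --- $p(k)\le r^{-k}F(r)\le\exp\bigl(kt+\tfrac{\pi^{2}}{6t}\bigr)=\exp(C\sqrt{k})$ at $t=t^{*}$ --- already suffices for every application in this paper: the factor $\tfrac{A}{k+1}$ is used only in the proof of Lemma~\ref{lem:trans-partitions} to absorb a sum over $k+1$ partition numbers, and replacing $A\exp(C\sqrt{k})$ there by $(k+1)\exp(C\sqrt{k})$ costs only a polynomial factor in $k$, which is harmless in all three running-time bounds. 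Second, if one wants the full statement without contour integration, Erd\H{o}s's elementary induction (which you mention in passing) is exactly the route taken by the first cited reference, so deferring to the citations at the end, as the paper itself does, is entirely appropriate.
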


We remark that the original proof of Hardy and Ramanujan~\cite{hardy-ramanujan} shows moreover that the optimal constant $A$ tends to $\frac{1}{4\sqrt{3}}$ as $k$ goes to infinity. From now on, we adopt constants $A,C$ given by Lemma~\ref{lem:hardy-littlewood} in the notation. We use Lemma~\ref{lem:hardy-littlewood} to obtain the following result, which is the core observation of this paper.

\begin{lemma}\label{lem:trans-partitions}
Let $T$ is a transitive tournament with $n$ vertices and $k$ be a nonnegative integer. Then $T$ has at most $A\cdot \exp(C\sqrt{k})\cdot (n+1)$ $k$-cuts, where $A,C$ are defined as in Lemma~\ref{lem:hardy-littlewood}.
\end{lemma}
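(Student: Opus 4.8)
The plan is to set up an explicit bijection between $k$-cuts of $T$ and certain sequences of ``displacements'', recognize these sequences as padded integer partitions, and then read off the claimed bound from Lemma~\ref{lem:hardy-littlewood} after one short summation estimate.

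Fix a transitive ordering $(v_1,\dots,v_n)$ of $V(T)$, so that $(v_i,v_j)\in E(T)$ if and only if $i<j$. Since a $k$-cut $(X,Y)$ asks only for a bound on the number of arcs directed from $Y$ to $X$, it is completely determined by the index set $\{x_1<x_2<\dots<x_a\}$ of the vertices placed in $X$, where $a=|X|$. I would first count the arcs from $Y$ to $X$: such an arc has the form $(v_i,v_{x_\ell})$ with $i<x_\ell$ and $v_i\in Y$, and for a fixed $\ell$ the number of indices $i<x_\ell$ with $v_i\in Y$ equals $(x_\ell-1)-(\ell-1)=x_\ell-\ell$. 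Hence $(X,Y)$ is a $k$-cut if and only if $\sum_{\ell=1}^{a}(x_\ell-\ell)\le k$. Setting $d_\ell:=x_\ell-\ell$, the condition $1\le x_1<x_2<\dots<x_a\le n$ is equivalent to $0\le d_1\le d_2\le\dots\le d_a\le n-a$, with inverse map $\ell\mapsto x_\ell=d_\ell+\ell$. Thus $k$-cuts with $|X|=a$ correspond bijectively to nondecreasing sequences $0\le d_1\le\dots\le d_a$ with $\sum_{\ell}d_\ell\le k$ (I drop the constraint $d_a\le n-a$, since for an upper bound it only helps).

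Now I would read such a sequence as a partition: reversing it and deleting the zeros yields a partition $\lambda$ of the integer $b:=\sum_\ell d_\ell\le k$ into at most $a$ parts, and conversely every partition of some $b\le k$ into at most $a$ parts, padded with zeros to length $a$, reproduces such a sequence. Summing over $a=0,1,\dots,n$, the number of $k$-cuts of $T$ is therefore at most $\sum_{b=0}^{k}\sum_{\lambda}N(\lambda)$, where the inner sum runs over partitions $\lambda$ of $b$ and $N(\lambda)$ is the number of $a\in\{0,1,\dots,n\}$ for which $\lambda$ has at most $a$ parts; clearly $N(\lambda)\le n+1$. Consequently the number of $k$-cuts is at most $(n+1)\sum_{b=0}^{k}p(b)$.

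It remains to prove $\sum_{b=0}^{k}p(b)\le A\exp(C\sqrt k)$, and by Lemma~\ref{lem:hardy-littlewood} it suffices to show $\sum_{b=0}^{k}\tfrac{\exp(C\sqrt b)}{b+1}\le\exp(C\sqrt k)$. I would prove this by induction on $k$: for $k=0$ both sides equal $1$, and for the inductive step it is enough to check $\exp(C\sqrt{k-1})+\tfrac{\exp(C\sqrt k)}{k+1}\le\exp(C\sqrt k)$, i.e.\ $\tfrac{k+1}{k}\le\exp\!\big(C(\sqrt k-\sqrt{k-1})\big)$. Since $\sqrt k-\sqrt{k-1}=\tfrac1{\sqrt k+\sqrt{k-1}}\ge\tfrac1{2\sqrt k}$ and $\exp(x)\ge 1+x$, the right-hand side is at least $1+\tfrac{C}{2\sqrt k}$, and $\tfrac{C}{2\sqrt k}\ge\tfrac1k$ for all $k\ge 1$ because $C=\pi\sqrt{2/3}>2$. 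Combining, the number of $k$-cuts is at most $(n+1)\cdot A\exp(C\sqrt k)$, as required. The conceptual heart of the argument is the displacement-to-partition bijection; the only mildly delicate point is arranging the summation so that no spurious logarithmic factor survives, and the inductive estimate above is precisely what absorbs the harmonic-type tail of $\sum_b\exp(C\sqrt b)/(b+1)$ into the leading term.
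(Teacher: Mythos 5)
Your proof is correct and follows essentially the same route as the paper: fix $|X|=a$, encode the cut as an integer partition of the number of back-arcs (your displacement sequence $d_\ell=x_\ell-\ell$ is just the conjugate of the paper's gap sequence $m_i=x_{i+1}-x_i-1$), and invoke Lemma~\ref{lem:hardy-littlewood}. The only real difference is bookkeeping at the end: the paper bounds $\sum_{b\le k}p(b)$ by $(k+1)$ times the largest term, whereas you absorb the harmonic-type factor by an explicit induction --- both yield the same $A\cdot\exp(C\sqrt{k})$ per choice of $a$.
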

\begin{proof}
We prove that for any number $a$, $0\leq a\leq n$, the number of $k$-cuts $(X,Y)$ such that $|X|=a$ and $|Y|=n-a$, is bounded by $A\cdot \exp(C\sqrt{k})$; summing through all the possible values of $a$ proves the claim.

We naturally identify the vertices of $T$ with numbers $1,2,\ldots,n$, such that arcs of $T$ are directed from smaller numbers to larger. Let us fix some $k$-cut $(X,Y)$ such that $|X|=a$ and $|Y|=n-a$. Let $x_1<x_2<\ldots<x_a$ be the vertices of $X$.

Let $m_i=x_{i+1}-x_i-1$ for $i=0,1,\ldots,a$; we use convention that $x_0=0$ and $x_{a+1}=n+1$. In other words, $m_i$ is the number of elements of $Y$ that are between two consecutive elements of $X$. Observe that every element of $Y$ between $x_i$ and $x_{i+1}$ is the tail of exactly $a-i$ arcs directed from $Y$ to $X$: the heads are $x_{i+1},x_{i+2},\ldots,x_a$. Hence, the total number of arcs directed from $Y$ to $X$ is equal to $k'=\sum_{i=0}^a m_i\cdot (a-i)=\sum_{i=0}^a m_{a-i}\cdot i\leq k$. 

We define a partition of $k'$ as follows: we take $m_{a-1}$ times number $1$, $m_{a-2}$ times number $2$, and so on, up to $m_0$ times number $a$.
Clearly, a $k$-cut of $T$ defines a partition of $k'$ in this manner. We now claim that knowing $a$ and the partition of $k'$, we can uniquely reconstruct the $k$-cut $(X,Y)$ of $T$, or conclude that this is impossible. Indeed, from   the partition we obtain all the numbers $m_0,m_1,\ldots,m_{a-1}$, while $m_a$ can be computed as $(n-a)-\sum_{i=0}^{a-1} m_i$. Hence, we know exactly how large must be the intervals between consecutive elements of $X$, and how far is the first and the last element of $X$ from the respective end of the ordering, which uniquely defines sets $X$ and $Y$. The only possibilities of failure during reconstruction are that (i) the numbers in the partition are larger than $a$, or (ii) computed $m_a$ turns out to be negative; in these cases, the partition does not correspond to any $k$-cut. Hence, we infer that the number of $k$-cuts of $T$ having $|X|=a$ and $|Y|=n-a$ is bounded by the sum of partition numbers of nonnegative integers smaller or equal to $k$, which by Lemma~\ref{lem:hardy-littlewood} is bounded by $(k+1)\cdot \frac{A}{k+1}\cdot \exp(C\sqrt{k})=A\cdot \exp(C\sqrt{k})$. 
\end{proof}

\subsection{$k$-cuts of semi-complete digraphs with a small \fas}

We have the following simple fact.

\begin{lemma}\label{lem:fast-partitions}
Assume that $T$ is a semi-complete digraph with a feedback arc set $F$ of size at most $k$. Let $T'$ be a transitive tournament on the same set of vertices, with vertices ordered as in any topological ordering of $T\setminus F$. Then every $k$-cut of $T$ is also a $2k$-cut of $T'$.
\end{lemma}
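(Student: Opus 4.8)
The plan is to take an arbitrary $k$-cut $(X,Y)$ of $T$ and show that the number of arcs of $T'$ directed from $Y$ to $X$ is at most $2k$; by definition this says that $(X,Y)$ is a $2k$-cut of $T'$. Write $(v_1,v_2,\ldots,v_n)$ for the common vertex ordering, which by assumption is a topological ordering of $T\setminus F$, and recall that in the transitive tournament $T'$ we have $(v_i,v_j)\in E(T')$ precisely when $i<j$. Hence every arc of $T'$ directed from $Y$ to $X$ has the form $(v_i,v_j)$ with $i<j$, $v_i\in Y$ and $v_j\in X$, and distinct such arcs correspond to distinct unordered pairs $\{v_i,v_j\}$.

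First I would split these arcs into two groups according to the behaviour of the same pair in $T$. For a pair $(v_i,v_j)$ as above, semi-completeness of $T$ guarantees that at least one of $(v_i,v_j)$, $(v_j,v_i)$ lies in $E(T)$. If $(v_i,v_j)\in E(T)$, then since $v_i\in Y$ and $v_j\in X$, this is an arc of $T$ directed from $Y$ to $X$, and the definition of a $k$-cut of $T$ bounds the number of such arcs by $k$. Otherwise $(v_j,v_i)\in E(T)$, and since $i<j$ this arc is directed backward in the ordering $(v_1,\ldots,v_n)$; as that ordering is topological for $T\setminus F$, every backward arc of $T$ must belong to $F$, so $(v_j,v_i)\in F$, and there are at most $|F|\le k$ such arcs.

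Finally I would observe that the two assignments are well defined and hit disjoint target sets: distinct arcs of $T'$ from $Y$ to $X$ yield distinct pairs $\{v_i,v_j\}$, so the first group injects into the arcs of $T$ directed from $Y$ to $X$ and the second injects into $F$. Adding the two bounds gives at most $k+k=2k$ arcs of $T'$ directed from $Y$ to $X$, which is exactly what is required. I do not expect a genuine obstacle: the argument is a direct case analysis combined with the fact that a topological ordering of $T\setminus F$ forces every ``reversed'' pair to lie in $F$; the only point worth stating carefully is that no arc is double-counted, which holds because for each pair the two cases are mutually exclusive.
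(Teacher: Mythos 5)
Your proof is correct and is essentially the paper's argument, just written out in full detail: the paper's one-line proof also splits the arcs of $T'$ directed from $Y$ to $X$ into those already present in $T$ (at most $k$ by the $k$-cut property) and those arising from reversed arcs, which must lie in $F$ (at most $k$ more). Your careful handling of the case split and the injectivity of the two assignments fills in exactly what the paper leaves implicit.
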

\begin{proof}
The claim follows directly from the observation that if $(X,Y)$ is a $k$-cut in $T$, then at most $k$ additional arcs directed from $Y$ to $X$ can appear after introducing arcs in $T'$ in place of deleted arcs from $F$.
\end{proof}

From Lemmata~\ref{lem:trans-partitions} and~\ref{lem:fast-partitions} we obtain the following corollary.

\begin{corollary}\label{cor:fast-partitions}
Every semi-complete digraph with $n$ vertices and with a feedback arc set of size at most $k$, has at most $A\cdot \exp(C\sqrt{2k})\cdot (n+1)$ $k$-cuts.
\end{corollary}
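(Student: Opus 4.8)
The plan is to chain together the two lemmata that immediately precede the corollary, so the argument is essentially a one-line deduction. First I would recall the setup: we are given a semi-complete digraph $T$ on $n$ vertices with a feedback arc set $F$ of size at most $k$. Fix such an $F$, let $\sigma$ be any topological ordering of the acyclic digraph $T\setminus F$ (such an ordering exists by the characterization of acyclic digraphs via topological orderings), and let $T'$ be the transitive tournament on $V(T)$ with vertices ordered according to $\sigma$. This is exactly the object produced in Lemma~\ref{lem:fast-partitions}.

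Next I would apply Lemma~\ref{lem:fast-partitions} to conclude that every $k$-cut of $T$ is a $2k$-cut of $T'$. In particular, the map that sends a $k$-cut $(X,Y)$ of $T$ to the same partition $(X,Y)$ viewed inside $T'$ is an injection from the set of $k$-cuts of $T$ into the set of $2k$-cuts of $T'$. Hence the number of $k$-cuts of $T$ is at most the number of $2k$-cuts of $T'$.

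Finally I would invoke Lemma~\ref{lem:trans-partitions} applied to the transitive tournament $T'$ with parameter $2k$ in place of $k$: since $T'$ has $n$ vertices, it has at most $A\cdot\exp(C\sqrt{2k})\cdot(n+1)$ $2k$-cuts. Combining the two bounds gives that $T$ has at most $A\cdot\exp(C\sqrt{2k})\cdot(n+1)$ $k$-cuts, which is precisely the statement of the corollary.

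I do not foresee a genuine obstacle here, since both ingredients are already established; the only care needed is the bookkeeping that $k$-cuts of $T$ embed injectively into $2k$-cuts of $T'$ (the identity on partitions is clearly injective, so this is immediate), and that we are allowed to instantiate Lemma~\ref{lem:trans-partitions} at the shifted parameter $2k$ rather than $k$. If one wanted to be slightly more careful about the degenerate case $k=0$ or $n=0$, one could note that the bounds hold trivially there as well, but no separate treatment is really required.
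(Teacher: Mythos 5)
Your proposal is correct and matches the paper's (implicit) argument exactly: the paper derives the corollary by combining Lemma~\ref{lem:fast-partitions} (every $k$-cut of $T$ is a $2k$-cut of the transitive tournament $T'$) with Lemma~\ref{lem:trans-partitions} instantiated at parameter $2k$. The injectivity bookkeeping you mention is the only content of the deduction, and you handle it correctly.
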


\subsection{$k$-cuts of semi-complete digraphs of small cutwidth}

To bound the number of $k$-cuts of semi-complete digraphs of small cutwidth, we need the following auxiliary combinatorial result.

\begin{lemma}\label{lem:bad}
Let $(X,Y)$ be a partition of $\{1,2,\ldots,n\}$ into two sets. We say that a pair $(a,b)$ is {\emph{bad}} if $a<b$, $a\in Y$ and $b\in X$. Assume that for every integer $t$ there are at most $k$ bad pairs $(a,b)$ such that $a\leq t<b$. Then the total number of bad pairs is at most $k(1+\ln k)$.
\end{lemma}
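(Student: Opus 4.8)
The plan is to charge every bad pair to the element of $X$ that plays the role of its larger coordinate, and then to extract information from the hypothesis by evaluating it at the integer immediately preceding each element of $X$. This turns the count into a harmonic sum, which gives exactly the $\ln k$ factor.

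To set this up, let $m=|X|$; if $X=\emptyset$ there are no bad pairs and we are done, so assume $m\geq 1$ and list the elements of $X$ as $b_1<b_2<\cdots<b_m$. For each $j$ let $c_j$ be the number of bad pairs whose larger coordinate equals $b_j$, that is, $c_j=|\{a\in Y: a<b_j\}|$. Every bad pair has a well-defined larger coordinate lying in $X$, so the total number of bad pairs is exactly $\sum_{j=1}^m c_j$, and it suffices to bound this sum.

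Next I would apply the hypothesis with $t=b_j-1$. A bad pair $(a,b)$ satisfies $a\leq b_j-1<b$ exactly when $a\in Y$ with $a<b_j$ and $b\in X$ with $b\geq b_j$; the second condition means $b\in\{b_j,b_{j+1},\ldots,b_m\}$, and for each such $b$ every one of the $c_j$ elements $a\in Y$ with $a<b_j$ yields a valid bad pair (indeed $a<b_j\leq b$). Hence there are precisely $c_j\cdot(m-j+1)$ bad pairs $(a,b)$ with $a\leq b_j-1<b$, and the hypothesis gives $c_j\cdot(m-j+1)\leq k$, i.e.\ $c_j\leq \frac{k}{m-j+1}$.

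It remains to sum these bounds; reindexing by $i=m-j+1$ gives $\sum_{j=1}^m c_j\leq k\sum_{i=1}^m \frac1i$. If $m\leq k$, then $\sum_{i=1}^m\frac1i\leq 1+\ln m\leq 1+\ln k$ and we are done. If $m>k$, the point to watch — and the only slightly delicate step — is that the harmonic sum must be truncated at $k$: for every index $j\leq m-k$ we have $m-j+1>k$, so $c_j\leq \frac{k}{m-j+1}<1$, and since $c_j$ is a nonnegative integer this forces $c_j=0$; thus only the $k$ terms with $i=m-j+1\in\{1,\ldots,k\}$ contribute, giving $\sum_{j}c_j\leq k\sum_{i=1}^k\frac1i\leq k(1+\ln k)$. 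Either way the total number of bad pairs is at most $k(1+\ln k)$, as claimed.
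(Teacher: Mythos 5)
Your proof is correct and is essentially the paper's own argument reflected left-to-right: the paper charges each bad pair to its smaller coordinate $y_i\in Y$, applies the hypothesis at $t=y_i$ to get $i\cdot m_i\le k$, and sums the resulting harmonic series truncated at $k$; you charge to the larger coordinate in $X$ and apply the hypothesis at $t=b_j-1$ to get the symmetric bound $c_j(m-j+1)\le k$. The only cosmetic difference is how the truncation is justified (the paper notes the last nonzero index is at most $k$; you note the integer counts vanish once the divisor exceeds $k$), so there is nothing substantive to change.
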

\begin{proof}
Let $y_1<y_2<\ldots<y_p$ be the elements of $Y$. Let $m_i$ be equal to the total number of elements of $X$ that are greater than $y_i$. Note that $m_i$ is exactly equal to the number of bad pairs whose first element is equal to $y_i$, hence the total number of bad pairs is equal to $\sum_{i=1}^p m_i$. Clearly, sequence $(m_i)$ is non-decreasing, so let $p'$ be the last index for which $m_{p'}>0$. We then have that the total number of bad pairs is equal to $\sum_{i=1}^{p'} m_i$. Moreover, observe that $p'\leq k$, as otherwise there would be more than $k$ bad pairs $(a,b)$ for which $a\leq y_{p'}<b$: for $a$ we can take any $y_i$ for $i\leq p'$ and for $b$ we can take any element of $X$ larger than $y_{p'}$.

We claim that $m_i\leq k/i$ for every $1\leq i\leq p'$. Indeed, observe that there are exactly $i\cdot m_i$ bad pairs $(a,b)$ for $a\leq y_i$ and $b>y_i$: $a$ can be chosen among $i$ distinct integers $y_1,y_2,\ldots,y_i$, while $b$ can be chosen among $m_i$ elements of $X$ larger than $y_i$. By the assumption we infer that $i\cdot m_i\leq k$, so $m_i\leq k/i$. Concluding, we have that the total number of bad pairs is bounded by $\sum_{i=1}^{p'} m_i\leq \sum_{i=1}^{p'} k/i=k\cdot H(p')\leq k\cdot H(k)\leq k(1+\ln{k})$, where $H(k)=\sum_{i=1}^{k} 1/i$ is the harmonic function.
\end{proof}

The following claim applies Lemma~\ref{lem:bad} to the setting of semi-complete digraphs.

\begin{lemma}\label{lem:cutwidth-partitions}
Assume that $T$ is a semi-complete digraph with $n$ vertices that admits an ordering of vertices $(v_1,v_2,\ldots,v_n)$ of width at most $k$. Let $T'$ be a transitive tournament on the same set of vertices, where $(v_i,v_j)\in E(T')$ if and only if $i<j$. Then every $k$-cut of $T$ is a $2k(1+\ln 2k)$-cut of $T'$.
\end{lemma}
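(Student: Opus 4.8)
The plan is to reduce the statement to Lemma~\ref{lem:bad} in essentially the same way that Lemma~\ref{lem:fast-partitions} reduces its statement to a counting argument on the transitive tournament $T'$. Fix a $k$-cut $(X,Y)$ of $T$. In $T'$, the arcs directed from $Y$ to $X$ are exactly the pairs $(v_a,v_b)$ with $a<b$, $v_a\in Y$ and $v_b\in X$; identifying each $v_i$ with its index $i$, these are precisely the \emph{bad pairs} of the partition $(X,Y)$ of $\{1,2,\ldots,n\}$ in the sense of Lemma~\ref{lem:bad}. So it suffices to show that for every integer $t$, the number of bad pairs $(a,b)$ with $a\le t<b$ is at most $2k$; then Lemma~\ref{lem:bad} gives that the total number of bad pairs, i.e.\ the number of arcs of $T'$ from $Y$ to $X$, is at most $2k(1+\ln 2k)$, which is exactly what we want.

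So the heart of the argument is the bound "$\le 2k$ at every cut position~$t$". Fix $t$ and consider the prefix $P=\{v_1,\ldots,v_t\}$ and suffix $S=\{v_{t+1},\ldots,v_n\}$ of the width-$k$ ordering. A bad pair $(a,b)$ with $a\le t<b$ has $v_a\in Y\cap P$ and $v_b\in X\cap S$. I would count such pairs by splitting on whether $T$ already has the arc $(v_a,v_b)$ or $(v_b,v_a)$ (at least one is present since $T$ is semi-complete). The first kind, where $(v_a,v_b)\in E(T)$, are arcs of $T$ directed from $S$ to $P$ (since $a\le t<b$), hence there are at most $k$ of them because the ordering has width at most $k$. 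For the second kind, where $(v_a,v_b)\notin E(T)$ but $(v_b,v_a)\in E(T)$: here $v_b\in S$, $v_a\in P$, so $(v_b,v_a)$ is an arc of $T$ directed from the suffix $S$ to the prefix $P$ across the same cut at position $t$ — wait, that direction is from $b$ to $a$ with $a\le t<b$, so yes it crosses the cut at $t$ from the suffix side to the prefix side. Hence there are also at most $k$ of these. Adding the two kinds, the number of bad pairs crossing position $t$ is at most $2k$, as needed.

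Thus the proof structure is: (i) translate "$k$-cut of $T$" and "arcs of $T'$ from $Y$ to $X$" into the bad-pair language of Lemma~\ref{lem:bad}; (ii) for each threshold $t$, bound bad pairs crossing $t$ by $2k$ via the semi-completeness dichotomy and the width-$k$ assumption applied to the cut at position $t$ (each of the two arc orientations contributes at most $k$); (iii) invoke Lemma~\ref{lem:bad} with parameter $2k$ to conclude a bound of $2k(1+\ln 2k)$ on the total number of arcs of $T'$ directed from $Y$ to $X$, so $(X,Y)$ is a $2k(1+\ln 2k)$-cut of $T'$.

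The step I expect to require the most care is (ii), and specifically making sure the two arc-orientation cases really both map onto cuts of the width-$k$ ordering in the correct direction: a bad pair $(a,b)$ has $a\le t<b$, so when $(v_a,v_b)\in E(T)$ this is literally an arc from the suffix $\{v_{t+1},\ldots,v_n\}$ into the prefix $\{v_1,\ldots,v_t\}$ only if we read the width as counting arcs from larger indices to smaller — but here $a<b$, so I must be careful that "bad" for the partition $(X,Y)$ does \emph{not} mean "backward in $\sigma$". The resolution is that width counts arcs from the suffix to the prefix regardless of the $X/Y$ labels, so an arc $(v_a,v_b)$ with $a\le t<b$ is counted by the width at $t$ precisely when it goes from index $b>t$ to index $a\le t$ — i.e.\ we need the arc oriented as $(v_b,v_a)$, not $(v_a,v_b)$. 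Re-examining: a bad pair contributes an arc $(v_a,v_b)$ of $T'$ from $Y$ to $X$; in $T$ itself, if $(v_b,v_a)\in E(T)$ then this is an arc from position $b$ to position $a<b$ crossing the cut at $t$, counted by width; if instead only $(v_a,v_b)\in E(T)$, then $(v_a,v_b)$ crosses the cut at $t$ in the \emph{forward} direction (from prefix to suffix) and is \emph{not} counted by width. So this naive split does not immediately give $2k$; one must instead observe that bad pairs of the second type correspond to arcs absent from $T$, hence to arcs present in $T'\setminus T$, and bound those differently — or, more cleanly, note that each bad pair $(a,b)$ crossing $t$ with only $(v_a,v_b)\in E(T)$ forces $(v_b,v_a)\notin E(T)$, and use a counting/injectivity argument tying these to the genuine width-$k$ arcs. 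Pinning down this second case correctly — most likely by arguing that for a fixed $t$ the bad pairs crossing $t$ inject into arcs of $T$ crossing the cut at $t$ in \emph{one} of the two directions, giving the clean bound $2k$ — is the one place where I would slow down and write the details carefully.
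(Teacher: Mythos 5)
Your setup is right and matches the paper's: translate arcs of $T'$ from $Y$ to $X$ into bad pairs, prove that at most $2k$ bad pairs cross each threshold $t$, and invoke Lemma~\ref{lem:bad} with parameter $2k$. You also correctly handle one of the two cases: bad pairs $(a,b)$ with $a\le t<b$ for which $(v_b,v_a)\in E(T)$ correspond to arcs of $T$ crossing the cut at position $t$ from the suffix to the prefix, so the width-$k$ hypothesis bounds them by $k$. But the other case is left genuinely open. You first assert (incorrectly, as you yourself then notice) that bad pairs with $(v_a,v_b)\in E(T)$ are also controlled by the width; after spotting that such an arc crosses the cut at $t$ in the forward direction and is therefore not counted by the width, you only gesture at ``a counting/injectivity argument tying these to the genuine width-$k$ arcs,'' which is not a proof and, as stated, would not work: these arcs have nothing to do with the width of the ordering.

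The missing observation is simpler and different in kind: a bad pair $(a,b)$ with $(v_a,v_b)\in E(T)$, $v_a\in Y$, $v_b\in X$ is an arc of $T$ directed from $Y$ to $X$, and the hypothesis that $(X,Y)$ is a $k$-cut \emph{of $T$} says there are at most $k$ such arcs in the whole digraph, hence at most $k$ crossing any fixed $t$. Note that your write-up never uses this hypothesis beyond fixing the cut, which is a structural sign the argument cannot close as written (for an arbitrary partition $(X,Y)$ the conclusion is false). The paper's proof is exactly your outline plus this one step: after reducing without loss of generality to $T$ a tournament, each bad pair crossing $t$ is either an arc of $T$ from $Y$ to $X$ (at most $k$ by the $k$-cut property of $(X,Y)$ in $T$) or an arc whose orientation differs between $T$ and $T'$ across the cut at $t$ (at most $k$ by the width bound), giving $2k$ in total.
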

\begin{proof}
Without loss of generality we assume that $T$ is in fact a tournament, as deleting any of two opposite arcs connecting two vertices can only make the set of $k$-cuts of $T$ larger, and does not increase the width of the ordering.

Identify vertices $v_1,v_2,\ldots,v_n$ with numbers $1,2,\ldots,n$. Let $(X,Y)$ be a $k$-cut of $T$. Note that arcs of $T'$ directed from $Y$ to $X$ correspond to bad pairs in the sense of Lemma~\ref{lem:bad}. Therefore, by Lemma~\ref{lem:bad} is suffices to prove that for every integer $t$, the number of arcs $(a,b)\in E(T')$ such that $a\leq t<b$, $a\in Y$, and $b\in X$, is bounded by $2k$. We know that the number of such arcs in $T$ is at most $k$, as there are at most $k$ arcs directed from $Y$ to $X$ in $T$ in total. Moreover, as the considered ordering of $T$ has cutwidth at most $k$, at most $k$ arcs between vertices from $\{1,2,\ldots,t\}$ and $\{t+1,\ldots,n\}$ can be directed in different directions in $T$ and in $T'$. We infer that the number of arcs $(a,b)\in E(T')$ such that $a\leq t<b$, $a\in Y$, and $b\in X$, is bounded by $2k$, and so the lemma follows.
\end{proof}

From Lemmata~\ref{lem:trans-partitions} and~\ref{lem:cutwidth-partitions} we obtain the following corollary.

\begin{corollary}\label{cor:cutwidth-partitions}Every 
  tournament with $n$ vertices and of  cutwidth at most $k$, has at most $A\cdot \exp(2C\sqrt{k(1+\ln 2k)})\cdot (n+1)$ $k$-cuts. 
%
\end{corollary}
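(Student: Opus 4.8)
The plan is to derive Corollary~\ref{cor:cutwidth-partitions} by composing the two cited results exactly as the preceding corollaries were obtained. First I would invoke Lemma~\ref{lem:cutwidth-partitions}: since the tournament $T$ has cutwidth at most $k$, it admits an ordering $(v_1,\ldots,v_n)$ of width at most $k$, and letting $T'$ be the transitive tournament on the same vertex set with $(v_i,v_j)\in E(T')$ iff $i<j$, every $k$-cut of $T$ is a $2k(1+\ln 2k)$-cut of $T'$. Thus the number of $k$-cuts of $T$ is at most the number of $2k(1+\ln 2k)$-cuts of $T'$.

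Next I would apply Lemma~\ref{lem:trans-partitions} with the parameter $k$ replaced by $k'' := 2k(1+\ln 2k)$: the transitive tournament $T'$ on $n$ vertices has at most $A\cdot\exp(C\sqrt{k''})\cdot(n+1)$ many $k''$-cuts. Substituting $k'' = 2k(1+\ln 2k)$ gives the bound $A\cdot\exp\bigl(C\sqrt{2k(1+\ln 2k)}\bigr)\cdot(n+1)$. Since $\sqrt{2k(1+\ln 2k)} = \sqrt{2}\cdot\sqrt{k(1+\ln 2k)}$, and noting that the statement writes the exponent as $2C\sqrt{k(1+\ln 2k)}$, the only remaining point is the purely arithmetical check that $C\sqrt{2}\leq 2C$, i.e. that $\sqrt{2}\leq 2$, which is immediate; so the stated bound is in fact a (very mild) weakening of the sharper bound $A\cdot\exp\bigl(\sqrt{2}\,C\sqrt{k(1+\ln 2k)}\bigr)\cdot(n+1)$, and chaining the two lemmata establishes it.

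There is essentially no obstacle here: the corollary is a one-line consequence of the two lemmata already proved, analogous to how Corollary~\ref{cor:fast-partitions} follows from Lemmata~\ref{lem:trans-partitions} and~\ref{lem:fast-partitions}. The only thing worth a sentence of care is the reduction in Lemma~\ref{lem:cutwidth-partitions} that lets us assume $T$ is a tournament rather than merely semi-complete (deleting one of two opposite arcs only enlarges the family of $k$-cuts and does not increase the width), which is why the corollary is stated for tournaments; if one wanted the semi-complete version one would phrase it for the tournament obtained by such deletions, but that is outside what is claimed. Hence the proof is simply: combine Lemma~\ref{lem:cutwidth-partitions} to pass from $k$-cuts of $T$ to $2k(1+\ln 2k)$-cuts of $T'$, then Lemma~\ref{lem:trans-partitions} to count the latter, then the trivial estimate $\sqrt{2}\leq 2$ to match the displayed exponent.
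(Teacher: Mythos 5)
Your derivation is correct and is exactly the paper's intended argument: the paper states the corollary as an immediate consequence of Lemmata~\ref{lem:trans-partitions} and~\ref{lem:cutwidth-partitions} with no further proof, and your chaining of the two (plus the harmless relaxation $\sqrt{2}\,C\leq 2C$ in the exponent) is precisely that composition. Nothing is missing.
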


\subsection{$k$-cuts of semi-complete digraphs with an ordering of small cost}

We firstly show the following lemma that proves that semi-complete digraphs with an ordering of small cost have even smaller cutwidth.

\begin{lemma}\label{lem:ola-small-cutwidth}
Let $T$ be a semi-complete digraph on $n$ vertices that admits an ordering $(v_1,v_2,\ldots,v_n)$ of cost at most $k$. Then the width of this ordering is at most $(4k)^{2/3}$.
\end{lemma}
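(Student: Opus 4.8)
The plan is to bound the width of the ordering $\sigma=(v_1,\ldots,v_n)$ at every cut position $t$ by using the cost bound $k$ together with the observation that many backward arcs crossing a given cut must be \emph{long}. Fix a position $t$ and let $w_t = |E(\{v_{t+1},\ldots,v_n\},\{v_1,\ldots,v_t\})|$ be the number of arcs crossing the cut from the right part to the left part; we want to show $w_t\le (4k)^{2/3}$. Suppose for contradiction that $w_t > (4k)^{2/3}$. The idea is that if there are many such backward arcs, then not all of them can be short, and the total length of the long ones already exceeds $k$, contradicting Lemma~\ref{lem:reordering} (which says the cost equals $\sum_s w_s \ge$ the sum of lengths of any set of backward arcs).

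Here is the counting I would carry out. For each backward arc $(v_i,v_j)$ crossing position $t$ (so $j\le t<i$), its length is $i-j$. Consider all such crossing arcs and sort them; because $T$ is semi-complete, between the left endpoints $\{v_j : j\le t, v_j \text{ a tail}\}$ and the right endpoints there is a lot of structure, but the cleanest route is: among the $w_t$ crossing backward arcs, at most $2\ell-1$ of them can have length at most $\ell$, for a suitable reason coming from the semi-complete structure near position $t$ --- more precisely, an arc of length $\le \ell$ crossing position $t$ must have both endpoints in the window $\{v_{t-\ell+1},\ldots,v_{t+\ell}\}$, and there are only $2\ell$ vertices there, hence at most $\binom{2\ell}{2}$, but we need a linear-in-$\ell$ bound, so instead I would argue directly: order the crossing backward arcs by length $\ell_1\le \ell_2\le\cdots\le \ell_{w_t}$ and show $\ell_r \ge$ roughly $\sqrt{r}$ or $\ell_r \ge r/(2\cdot\text{something})$. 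Actually the right estimate is that the $r$-th shortest crossing backward arc has length at least $\lceil \sqrt{r}\,\rceil$ (since an arc of length $\ell$ that crosses $t$ is determined by a tail in a length-$\ell$ window on one side and a head in a length-$\ell$ window on the other, giving at most $\ell^2$ arcs of length $\le\ell$ through $t$). Then $k \ge \sum_{r=1}^{w_t}\ell_r \ge \sum_{r=1}^{w_t}\sqrt{r} \ge \tfrac{2}{3}w_t^{3/2}\cdot(1-o(1))$, which after rearranging gives $w_t \le (4k)^{2/3}$ (the constant $4$ absorbing the lower-order slack, since $\tfrac23 w^{3/2}\ge \tfrac14 w^{3/2}$ comfortably, and $w\le (4k)^{2/3}\iff \tfrac14 w^{3/2}\le k$).

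Let me restate the clean version I would actually write: fix $t$; for an integer $\ell\ge 1$, every backward arc $(v_i,v_j)$ with $j\le t<i$ and $i-j\le \ell$ satisfies $t-\ell < j \le t$ and $t< i \le t+\ell$, so there are at most $\ell^2$ such arcs (at most $\ell$ choices for $j$ and $\ell$ for $i$). Hence if we list the crossing backward arcs in nondecreasing order of length, the $r$-th one has length at least $\sqrt{r}$. Therefore $k \ge \mathrm{cost}(\sigma) \ge \sum_{r=1}^{w_t} \sqrt{r} > \int_0^{w_t}\sqrt{x}\,dx = \tfrac{2}{3}w_t^{3/2} \ge \tfrac14 w_t^{3/2}$, where the first inequality uses Lemma~\ref{lem:reordering} and the fact that the sum of lengths of the crossing backward arcs at position $t$ is at most the total cost. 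Rearranging $\tfrac14 w_t^{3/2}\le k$ gives $w_t \le (4k)^{2/3}$, and since this holds for every $t$, the width of $\sigma$ is at most $(4k)^{2/3}$.

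The main obstacle is getting the combinatorial counting bound "$\le \ell^2$ arcs of length $\le\ell$ crossing a fixed position $t$'' phrased correctly and making sure the constant comes out to exactly $(4k)^{2/3}$ rather than something like $(ck)^{2/3}$ for a worse $c$; this is just bookkeeping with the window argument above and the elementary inequality $\sum_{r=1}^{w}\sqrt r \ge \tfrac14 w^{3/2}$ (valid for all $w\ge 1$), so I do not expect real difficulty, only care. One should also note that using Lemma~\ref{lem:reordering} is legitimate here: the cost of $\sigma$ equals $\sum_{s=1}^{n-1} w_s$, and the sum over crossing backward arcs of their lengths $\sum (i-j)$ is at most $\sum_{(v_i,v_j)\in E(T)}(i-j)[i>j] = \mathrm{cost}(\sigma)\le k$ directly from the definition of cost, so one does not even strictly need the reordering lemma --- the length of a single arc counted once is bounded by the whole cost.
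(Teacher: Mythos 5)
Your proof is correct and follows essentially the same route as the paper's: both bound the number of backward arcs of small length crossing a fixed position $t$ (the paper counts at most $j$ arcs of length exactly $j$, you count at most $\ell^2$ of length at most $\ell$), conclude that the sum of lengths of the $w_t$ crossing arcs is $\Omega(w_t^{3/2})$, and compare with the cost bound $k$. The final "clean version" is sound and even yields the slightly stronger bound $w_t\le(\tfrac{3}{2}k)^{2/3}$.
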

\begin{proof} We claim that for every 
  integer $t\geq 0$,   the number of arcs in $T$ directed from the set $\{v_{t+1},\ldots,v_n\}$ to $\{v_1,\ldots,v_t\}$ is at most  $(4k)^{2/3}$. Let $\ell$ be the number of such arcs; without loss of generality assume that $\ell>0$. Observe that at most one of these arcs may have length $1$, at most $2$ may have length $2$, etc., up to at most $\lfloor \sqrt{\ell}\rfloor-1$ may have length $\lfloor \sqrt{\ell}\rfloor-1$. It follows that at most $\sum_{i=1}^{\lfloor \sqrt{\ell}\rfloor-1} i\ \leq \ell/2$ of these arcs may have length smaller than $\lfloor\sqrt{\ell}\rfloor$. Hence, at least $\ell/2$ of the considered arcs have length at least $\lfloor\sqrt{\ell}\rfloor$, so the total sum of lengths of arcs is at least $\frac{\ell\cdot \lfloor\sqrt{\ell}\rfloor}{2}\geq \frac{\ell^{3/2}}{4}$. We infer that $k\geq \frac{\ell^{3/2}}{4}$, which means that $\ell \leq (4k)^{2/3}$.
\end{proof}

Lemma~\ref{lem:ola-small-cutwidth} ensures that only $(4k)^{2/3}$-cuts are interesting from the point of view of dynamic programming. Moreover, from Lemma~\ref{lem:ola-small-cutwidth} and Corollary~\ref{cor:cutwidth-partitions} we can derive the following statement that bounds the number of states of the dynamic program.

\begin{corollary}\label{cor:ola-partitions}
If $T$ is a semi-complete digraph with $n$ vertices that admits an ordering of cost at most $k$, then the number of $(4k)^{2/3}$-cuts of $T$ is bounded by $A\cdot \exp(2C\cdot (4k)^{1/3}\cdot \sqrt{1+\ln (2\cdot(4k)^{2/3})})\cdot (n+1)$.
\end{corollary}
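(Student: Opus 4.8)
The plan is to combine the two preceding results: Lemma~\ref{lem:ola-small-cutwidth}, which converts a bound on the cost of an ordering into a bound on its width, and Corollary~\ref{cor:cutwidth-partitions}, which bounds the number of $\ell$-cuts of a tournament of cutwidth at most $\ell$. First I would observe that if $T$ admits an ordering of cost at most $k$, then by Lemma~\ref{lem:ola-small-cutwidth} this very ordering has width at most $(4k)^{2/3}$; since the width is an integer, it is in fact at most $\ell:=\lfloor (4k)^{2/3}\rfloor$, and consequently $\ctw(T)\le \ell$.

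Next I would reduce to the tournament case exactly as in the proof of Lemma~\ref{lem:cutwidth-partitions}: deleting one arc out of every pair of oppositely directed arcs can only enlarge the family of $(4k)^{2/3}$-cuts and does not increase the width of the fixed ordering, so it suffices to bound the number of $\ell$-cuts of the resulting tournament $T_0$, which still admits an ordering of width at most $\ell$ and hence has cutwidth at most $\ell$. Applying Corollary~\ref{cor:cutwidth-partitions} to $T_0$ with parameter $\ell$ gives that the number of $\ell$-cuts of $T_0$ --- equivalently, the number of $(4k)^{2/3}$-cuts of $T$, since a partition is an $\ell$-cut if and only if it is a $(4k)^{2/3}$-cut (the number of crossing arcs is an integer) --- is at most $A\cdot\exp(2C\sqrt{\ell(1+\ln 2\ell)})\cdot(n+1)$.

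Finally I would substitute $\ell\le (4k)^{2/3}$. Since the function $x\mapsto \sqrt{x(1+\ln 2x)}$ is nondecreasing for $x\ge 1$, we have $2C\sqrt{\ell(1+\ln 2\ell)}\le 2C\sqrt{(4k)^{2/3}\bigl(1+\ln(2\cdot(4k)^{2/3})\bigr)}=2C\cdot(4k)^{1/3}\cdot\sqrt{1+\ln(2\cdot(4k)^{2/3})}$, and plugging this into the previous bound yields exactly the claimed estimate.

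There is essentially no deep obstacle here: the statement is a direct corollary of Lemma~\ref{lem:ola-small-cutwidth} and Corollary~\ref{cor:cutwidth-partitions}. The only points requiring a little care are the integrality rounding of $(4k)^{2/3}$ (handled by passing to $\ell=\lfloor(4k)^{2/3}\rfloor$ and using that ``$(4k)^{2/3}$-cut'' and ``$\ell$-cut'' mean the same thing), the reduction from semi-complete to tournament needed because Corollary~\ref{cor:cutwidth-partitions} is phrased for tournaments, and the monotonicity invoked in the last substitution. Trivial small cases such as $k=0$ or $\ell\le 1$ should be inspected separately, but they are immediate.
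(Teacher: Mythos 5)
Your proposal is correct and follows exactly the route the paper intends: the paper derives this corollary directly from Lemma~\ref{lem:ola-small-cutwidth} (cost at most $k$ implies width at most $(4k)^{2/3}$) combined with Corollary~\ref{cor:cutwidth-partitions}, with the parameter $(4k)^{2/3}$ substituted for $k$ in the latter bound. Your extra care about integrality, the monotonicity of $x\mapsto\sqrt{x(1+\ln 2x)}$, and the reduction to a tournament (needed because Corollary~\ref{cor:cutwidth-partitions} is phrased for tournaments while the present statement concerns semi-complete digraphs, a step already present in the proof of Lemma~\ref{lem:cutwidth-partitions}) only makes explicit what the paper leaves implicit.
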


\section{The algorithms}\label{sec:algorithms}

For a given semi-complete digraph $T$, let $\prt(T,k)$ denote the family of $k$-cuts of $T$. We firstly show how using our approach one can find a simple algorithm for \fast.

\begin{theorem}\label{thm:alg-fast}
There exists an algorithm that, given a semi-complete digraph $T$ on $n$ vertices and an integer $k$, in time $\exp(C\sqrt{2k})\cdot n^{O(1)}$ either finds a feedback arc set of $T$ of size at most $k$ or correctly concludes that this is impossible, where $C=\pi\sqrt{\frac{2}{3}}$ .
\end{theorem}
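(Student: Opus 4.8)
The plan is to turn the combinatorial bound of Corollary~\ref{cor:fast-partitions} into an algorithm via a cheapest-path computation over the family $\prt(T,k)$ of $k$-cuts. First I would pass, using Lemma~\ref{lem:fast-order}, to the layout formulation: it suffices to decide whether $V(T)$ admits an ordering with at most $k$ backward arcs and, if so, to output one. The structural core is a left-to-right \emph{charging scheme}. Given an ordering $(v_1,\dots,v_n)$, write $X_t=\{v_1,\dots,v_t\}$; when $v_{t+1}$ is appended to $X_t$, charge it with $|E(\{v_{t+1}\},X_t)|$, i.e., the number of arcs directed from $v_{t+1}$ into $X_t$. These are exactly the arcs between $v_{t+1}$ and $X_t$ that are backward in the final ordering, and every backward arc $(v_i,v_j)$ with $i>j$ is charged exactly once, at the step where $v_i$ is appended; hence the total charge equals the number of backward arcs of the ordering. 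In particular, if the ordering has at most $\ell$ backward arcs then for each $t$ every arc from $V(T)\setminus X_t$ to $X_t$ is backward, so there are at most $\ell$ of them: every prefix $(X_t,V(T)\setminus X_t)$ is an $\ell$-cut of $T$.

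The algorithm is then the following. Using Lemma~\ref{lem:enumeration}, enumerate the $k$-cuts of $T$ with polynomial-time delay; if the number produced ever exceeds $A\cdot\exp(C\sqrt{2k})\cdot(n+1)$, stop and report that there is no feedback arc set of size at most $k$, which is correct by Corollary~\ref{cor:fast-partitions}. Otherwise the whole family $\prt(T,k)$ has been listed, and $|\prt(T,k)|\le A\cdot\exp(C\sqrt{2k})\cdot(n+1)$. Build a weighted directed acyclic graph $\mathcal{D}$ on vertex set $\prt(T,k)$ (which contains both $(\emptyset,V(T))$ and $(V(T),\emptyset)$), putting an arc from $(X,Y)$ to $(X',Y')$, both in $\prt(T,k)$, of weight $|E(\{v\},X)|$ whenever $X'=X\cup\{v\}$ for some $v\in Y$; note that following an arc strictly increases $|X|$, so $\mathcal{D}$ is acyclic. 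Finally compute a minimum-weight path from $(\emptyset,V(T))$ to $(V(T),\emptyset)$ in $\mathcal{D}$, answer affirmatively if and only if its weight is at most $k$, and in that case read off the ordering (vertices in the order in which they enter $X$) and then, by Lemma~\ref{lem:fast-order}, a feedback arc set of size at most $k$.

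For correctness, every source-to-sink path of $\mathcal{D}$ yields an ordering of $V(T)$ whose number of backward arcs equals the path weight, by the charging scheme; hence the minimum path weight is at least the optimum $\mathrm{OPT}$ of \fast on $T$. Conversely, if $\mathrm{OPT}\le k$ then the prefixes of an optimal ordering are $\mathrm{OPT}$-cuts, hence $k$-cuts, so the path corresponding to that ordering lies in $\mathcal{D}$ and has weight $\mathrm{OPT}\le k$. Thus the minimum path weight is at most $k$ exactly when $\mathrm{OPT}\le k$. As for the running time: the enumeration step runs in $n^{O(1)}$ per produced cut and is aborted after $O(\exp(C\sqrt{2k})\cdot n)$ cuts; building $\mathcal{D}$ costs, per $k$-cut and per candidate vertex $v$, one membership query for $X\cup\{v\}$ in a dictionary storing $\prt(T,k)$ plus the evaluation of $|E(\{v\},X)|$; and a cheapest path in a directed acyclic graph is found in time linear in its size. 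Altogether this is $\exp(C\sqrt{2k})\cdot n^{O(1)}$.

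The step I expect to be most delicate is the justification that restricting the dynamic-programming states to $k$-cuts loses nothing: one needs the charging scheme to force all prefixes of a cheap ordering to be $k$-cuts (so no optimal solution is discarded) and, in the other direction, that every path through $k$-cuts is realized by a genuine ordering of exactly the claimed cost. The remaining points --- soundness of the abort rule and efficient membership testing in $\prt(T,k)$ --- are bookkeeping and follow directly from Corollary~\ref{cor:fast-partitions} and Lemma~\ref{lem:enumeration}. The very same template (enumerate the relevant cuts, abort if too numerous, otherwise run a cheapest-path dynamic program over them) will drive the subsequent algorithms for \textsc{Cutwidth} and \olaf, fed respectively by Corollaries~\ref{cor:cutwidth-partitions} and~\ref{cor:ola-partitions}.
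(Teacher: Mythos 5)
Your proposal is correct and follows essentially the same route as the paper: enumerate $k$-cuts with polynomial delay, abort via Corollary~\ref{cor:fast-partitions} if too many appear, and otherwise run a shortest-path dynamic program over the cut-extension digraph, with the same charging argument identifying path weights with numbers of backward arcs. The only (immaterial) difference is that you exploit acyclicity of the auxiliary digraph for the shortest-path step where the paper invokes Dijkstra's algorithm.
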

\begin{proof}
Using Lemma~\ref{lem:enumeration}, we enumerate all the $k$-cuts of $T$. If we exceed the bound of $A\cdot \exp(C\sqrt{2k})\cdot (n+1)$ during enumeration, by Corollary~\ref{cor:fast-partitions} we may safely terminate the computation providing a negative answer; note that this happens after using at most $\exp(C\sqrt{2k})\cdot n^{O(1)}$ time, as the cuts are output with polynomial time delay. Hence, from now on we assume that we have the set $\prt:=\prt(T,k)$ and we know that $|\prt|\leq A\cdot \exp(C\sqrt{2k})\cdot (n+1)$.

We now describe a dynamic programming procedure that computes the size of optimal feedback arc set basing on the set $\prt$; the dynamic program is based on the approach presented in~\cite{ja}. We define an auxiliary weighted digraph $D$ with vertex set $\prt$. Intuitively, a vertex from $\prt$ corresponds to a partition into prefix and suffix of the ordering.

Formally, we define arcs of $D$ as follows. We say that cut $(X_2,Y_2)$ {\emph{extends}} cut $(X_1,Y_1)$ if there is one vertex $v\in Y_1$ such that $X_2=X_1\cup \{v\}$ and, hence, $Y_2=Y_1\setminus \{v\}$. We put an arc in $D$ from cut $(X_1,Y_1)$ to cut $(X_2,Y_2)$ if $(X_2,Y_2)$ extends $(X_1,Y_1)$; the weight of this arc is equal to $|E(\{v\},X_1)|$, that is, the number of arcs that cease to be be directed from the right side to the left side of the partition when moving $v$ between these parts. Note that thus each vertex of $D$ has at most $n$ outneighbours, so $|E(D)|$ is bounded by $O(|\prt|\cdot n)$. Moreover, the whole graph $D$ can be constructed in $|\prt|\cdot n^{O(1)}$ time by considering all the vertices of $D$ and examining each of at most $n$ candidates for outneighbours in polynomial time.

Observe that a path from vertex $(\emptyset,V(T))$ to a vertex $(V(T),\emptyset)$ of total weight $\ell$ defines an ordering of vertices of $T$ that has exactly $\ell$ backward arcs --- each of these edges was taken into account while moving its tail from the right side of the partition to the left side. On the other hand, every ordering of vertices of $T$ that has exactly $\ell\leq k$ backward arcs defines a path from $(\emptyset,V(T))$ to $(V(T),\emptyset)$ in $D$ of total weight $\ell$; note that all partitions into prefix and suffix in this ordering are $k$-cuts, so they constitute legal vertices in $D$. Hence, we need to check whether vertex $(V(T),\emptyset)$ can be reached from $(\emptyset,V(T))$ by a path of total length at most $k$. This, however, can be done in time $O((|V(D)|+|E(D)|)\log |V(D)|)=O(\exp(C\sqrt{2k})\cdot n^{O(1)})$ using Dijkstra's algorithm. The feedback arc set of size at most $k$ can be easily retrieved from the constructed path in polynomial time.
\end{proof}

We remark that it is straightforward to adapt the algorithm of Theorem~\ref{thm:alg-fast} to the weighted case, where all the arcs are assigned a real weight larger or equal to $1$ and we parametrize by the target total weight of the solution. As the minimum weight is at least $1$, we may still consider only $k$-cuts of the digraph where the weights are forgotten. On this set we employ a modified dynamic programming routine, where the weights of arcs in digraph $D$ are not simply the number of arcs in $E(\{v\},X_1)$, but their total weight. We omit the details here.

We now proceed to the main result of this paper, i.e., the subexponential algorithm for cutwidth of a semi-complete digraph. The following theorem essentially follows from Lemma~\ref{lem:enumeration}, Corollary~\ref{cor:cutwidth-partitions} and the dynamic programming algorithm from~\cite{ja}; we include the proof for the sake of completeness.

\begin{theorem}\label{thm:alg-cutwidth}
There exists an algorithm that, given a semi-complete digraph $T$ on $n$ vertices and an integer $k$, in time $2^{O(\sqrt{k\log k})}\cdot n^{O(1)}$ either computes a vertex ordering of width at most $k$ or correctly concludes that this is impossible.
\end{theorem}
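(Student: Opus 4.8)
The plan is to mimic the structure of the proof of Theorem~\ref{thm:alg-fast}, replacing the \fas{} dynamic program with the cutwidth dynamic program of~\cite{ja} and replacing the combinatorial bound of Corollary~\ref{cor:fast-partitions} with the one of Corollary~\ref{cor:cutwidth-partitions}. First I would invoke Lemma~\ref{lem:enumeration} to enumerate $k$-cuts of $T$ with polynomial-time delay. The key observation, as in the \fast{} algorithm, is that we do not need to know in advance whether $\ctw(T)\leq k$: we simply run the enumeration, and if at any point the number of $k$-cuts found exceeds $A\cdot \exp(2C\sqrt{k(1+\ln 2k)})\cdot(n+1)$, then by Corollary~\ref{cor:cutwidth-partitions} (applied to the tournament obtained by arbitrarily deleting one arc from each pair of opposite arcs, which only enlarges the set of $k$-cuts and does not increase cutwidth) we may safely report that $\ctw(T)>k$. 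Since the delay is polynomial, this premature termination happens within $\exp(2C\sqrt{k(1+\ln 2k)})\cdot n^{O(1)} = 2^{O(\sqrt{k\log k})}\cdot n^{O(1)}$ time. Otherwise we end up with the explicit family $\prt := \prt(T,k)$ of size at most $A\cdot \exp(2C\sqrt{k(1+\ln 2k)})\cdot(n+1)$.

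Next I would set up the dynamic program over $\prt$, analogously to the digraph $D$ built in Theorem~\ref{thm:alg-fast}, but now tracking \emph{width} rather than \emph{total number of backward arcs}. Build an auxiliary digraph $D$ whose vertex set is $\prt$, with an arc from $(X_1,Y_1)$ to $(X_2,Y_2)$ whenever the latter extends the former by moving a single vertex $v$ from $Y_1$ to $X_1$. The relevant quantity at a cut $(X,Y)$ is $|E(Y,X)|$, the number of arcs crossing from the suffix to the prefix; an ordering of width at most $k$ corresponds precisely to a path from $(\emptyset, V(T))$ to $(V(T),\emptyset)$ in $D$ all of whose intermediate vertices are $k$-cuts — which is automatic, since all such cuts lie in $\prt$. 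So instead of a shortest-path computation we need a reachability computation: decide whether $(V(T),\emptyset)$ is reachable from $(\emptyset,V(T))$ in $D$. (One could equivalently phrase this as a bottleneck-shortest-path, but plain reachability in $D$ already encodes the width-$\leq k$ constraint, since only $k$-cuts are vertices of $D$.) This runs in $O(|V(D)|+|E(D)|) = \exp(2C\sqrt{k(1+\ln 2k)})\cdot n^{O(1)}$ time, since each vertex of $D$ has at most $n$ outneighbours, and $D$ itself is constructible in $|\prt|\cdot n^{O(1)}$ time. An explicit ordering of width at most $k$ is read off from the path.

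The only genuine subtlety — and the step I would be most careful about — is the correctness of the early-termination step, specifically that Corollary~\ref{cor:cutwidth-partitions} applies. Corollary~\ref{cor:cutwidth-partitions} is stated for \emph{tournaments}, whereas $T$ is semi-complete; but as noted in the proof of Lemma~\ref{lem:cutwidth-partitions}, deleting one of each pair of opposite arcs neither decreases the family of $k$-cuts nor increases the width of any ordering, so a $k$-cut of $T$ is a $k$-cut of the derived tournament $T_0$, and $\ctw(T_0)\leq\ctw(T)\leq k$; hence $T$ has at most as many $k$-cuts as $T_0$, namely at most $A\cdot\exp(2C\sqrt{k(1+\ln 2k)})\cdot(n+1)$. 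Everything else is a routine transcription of the \fast{} argument, with Dijkstra replaced by breadth-first search and the arc weights $|E(\{v\},X_1)|$ no longer summed along the path but only used to certify that each visited cut is a legitimate $k$-cut (which is already guaranteed by membership in $\prt$). Plugging in $C=\pi\sqrt{2/3}$ and simplifying $\exp(2C\sqrt{k(1+\ln 2k)}) = 2^{O(\sqrt{k\log k})}$ completes the running-time bound.
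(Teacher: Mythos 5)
Your proposal is correct and follows essentially the same route as the paper: enumerate $k$-cuts with polynomial delay, abort with a negative answer if the bound of Corollary~\ref{cor:cutwidth-partitions} is exceeded, and otherwise run a reachability computation on the ``extends'' digraph over $\prt(T,k)$, noting that the width constraint is encoded by membership in $\prt$ so no arc weights are needed. Your extra remark reconciling the fact that Corollary~\ref{cor:cutwidth-partitions} is stated for tournaments while $T$ is semi-complete (via deleting one arc of each opposite pair) is a legitimate and welcome bit of care that the paper's own proof glosses over.
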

\begin{proof}
Using Lemma~\ref{lem:enumeration} we enumerate all the $k$-cuts of $T$. If we exceed the bound of $A\cdot \exp(2C\sqrt{k(1+\ln 2k)})\cdot (n+1)=2^{O(\sqrt{k\log k})}\cdot n$ during enumeration, by Corollary~\ref{cor:cutwidth-partitions} we may safely terminate the computation providing a negative answer; note that this happens after using at most $2^{O(\sqrt{k\log k})}\cdot n^{O(1)}$ time, as the cuts are output with polynomial time delay. Hence, from now on we assume that we have the set $\prt:=\prt(T,k)$ and we know that $|\prt|=2^{O(\sqrt{k\log k})}\cdot n$.

We now recall the dynamic programming procedure from~\cite{ja} that basing on the set $\prt$ computes an ordering of width at most $k$ or correctly concludes that it is impossible. 

We proceed very similarly to the proof of Theorem~\ref{thm:alg-fast}. Define an auxiliary digraph $D$ on the vertex set $\prt$, where we put an arc from cut $(X_1,Y_1)$ to cut $(X_2,Y_2)$ if and only if $(X_2,Y_2)$ extends $(X_1,Y_1)$. Clearly, paths in $D$ from $(\emptyset,V(T))$ to $(V(T),\emptyset)$ correspond to orderings of $V(T)$ of cutwidth at most $k$. Therefore, it suffices to construct digraph $D$ and run a depth-first search from the vertex $(\emptyset,V(T))$. Note that $D$ has at most $|\prt|\leq 2^{O(\sqrt{k\log k})}\cdot n$ vertices, and every vertex has at most $n$ outneighbours; hence $|E(D)|\leq 2^{O(\sqrt{k\log k})}\cdot n^2$. Therefore, we can construct $D$ in $2^{O(\sqrt{k\log k})}\cdot n^{O(1)}$ time by running through all the vertices and examining every candidate for an outneighbour in polynomial time. Then we can check whether there exists a path from from $(\emptyset,V(T))$ to $(V(T),\emptyset)$ using depth-first search; the corresponding ordering may be retrieved from this path in polynomial time.
\end{proof}

Finally, we present how the framework can be applied to the \ola problem.

\begin{theorem}\label{thm:alg-ola}
There exists an algorithm that, given a semi-complete digraph $T$ on $n$ vertices and an integer $k$, in time $2^{O(k^{1/3}\sqrt{\log k})}\cdot n^{O(1)}$ either computes a vertex ordering of cost at most $k$, or correctly concludes that it is not possible.
\end{theorem}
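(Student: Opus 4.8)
The plan is to follow the template of the proofs of Theorems~\ref{thm:alg-fast} and~\ref{thm:alg-cutwidth}, using Lemma~\ref{lem:ola-small-cutwidth} to reduce the relevant cut size and Corollary~\ref{cor:ola-partitions} to bound the number of dynamic programming states. First I would set $k'=\lfloor (4k)^{2/3}\rfloor$ and invoke Lemma~\ref{lem:enumeration} to enumerate the $k'$-cuts of $T$ with polynomial-time delay. If the number of enumerated cuts ever exceeds $A\cdot \exp(2C\cdot (4k)^{1/3}\cdot \sqrt{1+\ln (2\cdot(4k)^{2/3})})\cdot (n+1)$, then by Corollary~\ref{cor:ola-partitions} no ordering of cost at most $k$ can exist, so the algorithm safely reports failure; since the cuts are produced with polynomial delay, this happens within $2^{O(k^{1/3}\sqrt{\log k})}\cdot n^{O(1)}$ time. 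Otherwise we end up with the complete family $\prt:=\prt(T,k')$, of size $2^{O(k^{1/3}\sqrt{\log k})}\cdot n$.

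Next I would run the same dynamic program as before, but with arc weights tuned to the cost functional rather than to plain width. Build the auxiliary digraph $D$ on vertex set $\prt$, putting an arc from $(X_1,Y_1)$ to $(X_2,Y_2)$ whenever $(X_2,Y_2)$ extends $(X_1,Y_1)$, i.e., $X_2=X_1\cup\{v\}$ for some $v\in Y_1$, and giving this arc weight $|E(Y_1\setminus\{v\},X_1\cup\{v\})|$, the number of arcs directed from the suffix to the prefix of the new cut. A path from $(\emptyset,V(T))$ to $(V(T),\emptyset)$ visiting the cuts $C_0,C_1,\ldots,C_n$ then has total weight $\sum_{t=1}^{n-1}|E(Y_t,X_t)|$, where $C_t=(X_t,Y_t)$ (the last arc contributes $0$ since $Y_n=\emptyset$), which by Lemma~\ref{lem:reordering} is exactly the cost of the corresponding ordering of $V(T)$. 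Conversely, every ordering of cost at most $k$ has width at most $(4k)^{2/3}$ by Lemma~\ref{lem:ola-small-cutwidth}, so all of its prefix/suffix partitions are $k'$-cuts, hence lie in $\prt$, and the ordering yields such a path of weight equal to its cost. It therefore suffices to compute a minimum-weight path from $(\emptyset,V(T))$ to $(V(T),\emptyset)$ in $D$ using Dijkstra's algorithm and compare its weight with $k$: if it is at most $k$ we output the associated ordering, and otherwise (in particular if no such path exists at all) we report that no ordering of cost at most $k$ exists. Since $|V(D)|=|\prt|=2^{O(k^{1/3}\sqrt{\log k})}\cdot n$ and every vertex has at most $n$ out-neighbours, $D$ can be built and searched in $2^{O(k^{1/3}\sqrt{\log k})}\cdot n^{O(1)}$ time, so the whole procedure stays within the claimed bound.

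I do not expect a genuinely hard step here; the proof is an assembly of pieces already in place. The two points that need care are, first, justifying that restricting attention to $k'$-cuts loses no optimal solution — this is precisely Lemma~\ref{lem:ola-small-cutwidth} — and second, checking that the chosen arc weights make the weight of a path coincide with the cost of the ordering, which is the telescoping computation of Lemma~\ref{lem:reordering} over the cuts on the path. Beyond that, the only minor issues are the rounding of $(4k)^{2/3}$ to an integer and small values of $k$, which can be absorbed into the constant in the exponent (or dispatched by brute force), and the trivial observation that $(\emptyset,V(T))$ and $(V(T),\emptyset)$ are $0$-cuts and hence always present in $\prt$, so that the absence of an $(\emptyset,V(T))$--$(V(T),\emptyset)$ path in $D$ is itself a valid certificate that no ordering of cost at most $k$ exists.
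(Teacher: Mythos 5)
Your proposal is correct and follows essentially the same route as the paper: enumerate the $(4k)^{2/3}$-cuts, reject if the bound of Corollary~\ref{cor:ola-partitions} is exceeded, and otherwise run a weighted shortest-path computation on the cut digraph, with correctness resting on Lemma~\ref{lem:ola-small-cutwidth} and Lemma~\ref{lem:reordering}. The only (immaterial) difference is that you weight the arc into a cut $(X_2,Y_2)$ by $|E(Y_2,X_2)|$ whereas the paper weights the arc out of $(X_1,Y_1)$ by $|E(Y_1,X_1)|$; both telescope to the same total $\sum_{t=1}^{n-1}|E(Y_t,X_t)|$, i.e., the cost of the ordering.
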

\begin{proof}
Using Lemma~\ref{lem:enumeration} we enumerate the $(4k)^{2/3}$-cuts of $T$. If we exceed the bound of $A\cdot \exp(2C\cdot (4k)^{1/3}\cdot \sqrt{1+\ln (2\cdot(4k)^{2/3})})\cdot (n+1)=2^{O(k^{1/3}\sqrt{\log k})}\cdot n$ during enumeration, by Corollary~\ref{cor:ola-partitions} we may safely terminate the computation providing a negative answer; note that this happens after using at most $2^{O(k^{1/3}\sqrt{\log k})}\cdot n^{O(1)}$ time, as the cuts are output with polynomial time delay. Hence, from now on we assume that we have the set $\prt:=\prt(T,(4k)^{2/3})$ and we know that $|\prt|=2^{O(k^{1/3}\sqrt{\log k})}\cdot n$.

We proceed very similarly to the proof of Theorem~\ref{thm:alg-fast}. Define an auxiliary digraph $D$ on the vertex set $\prt$, where we put an arc from $(X_1,Y_1)$ to $(X_2,Y_2)$ if and only if $(X_2,Y_2)$ extends $(X_1,Y_1)$; the weight of this arc is equal to $|E(Y_1,X_1)|$. As in the proof of Theorem~\ref{thm:alg-fast}, paths from $(\emptyset,V(T))$ to $(V(T),\emptyset)$ of total weight $\ell\leq k$ correspond one-to-one to orderings with cost~$\ell$: the weight accumulated along the path computes correctly the cost of the ordering due to Lemma~\ref{lem:reordering}. Note that Lemma~\ref{lem:ola-small-cutwidth} ensures that in an ordering of cost at most $k$, the only feasible partitions into prefix and suffix of the ordering are in $\prt$, so they constitute legal vertices in $D$. 

Similarly as in the proof of Theorems~\ref{thm:alg-fast} and~\ref{thm:alg-cutwidth}, $D$ has $2^{O(k^{1/3}\sqrt{\log k})}\cdot n^{O(1)}$ vertices and arcs, and can be constructed in 
$2^{O(k^{1/3}\sqrt{\log k})}\cdot n^{O(1)}$  time. Hence, we may apply Dijkstra's algorithm to check whether vertex $(V(T),\emptyset)$ is reachable from $(\emptyset,V(T))$ via a path of length at most $k$. The corresponding ordering may be retrieved from this path in polynomial time.
\end{proof}

Similarly to Theorem~\ref{thm:alg-fast}, it is also straightforward to adapt the algorithm of Theorem~\ref{thm:alg-ola} to the natural weighted variant of the problem, where each arc is assigned a real weight larger or equal to $1$, each arc directed backward in the ordering contributes to the cost with its weight multiplied by the length of the arc, and we parametrize by the total target cost.

\section{Conclusions}\label{sec:conclusions}

In this paper we showed that a number of vertex ordering problems on tournaments, and more generally, on semi-complete digraphs, admit subexponential parameterized algorithms. 

We believe that our approach provides a deep insight into the structure of problems on semi-complete digraphs solvable in subexponential parameterized time: in instances with a positive answer, the space of naturally relevant objects, namely $k$-cuts, is of subexponential size. We hope that this kind of algorithm design strategy may be applied to other problems as well. 

Clearly, it is possible to pipeline the presented algorithm for \fas on semi-complete digraphs with a simple kernelization algorithm, which can be found, e.g., in~\cite{AlonLS09}, to separate the polynomial dependency on $n$ from the subexponential dependency on $k$ in the running time. This can be done also for \ola, as this problem admits a simple linear kernel; we omit the details here. 

However, we believe that a more important challenge is to investigate whether the $\sqrt{\log k}$ factor in the exponent of the running times of the algorithms of Theorems~\ref{thm:alg-cutwidth} and~\ref{thm:alg-ola} is necessary. At this moment, appearance of this factor is a result of pipelining Lemma~\ref{lem:trans-partitions} with Lemma~\ref{lem:bad} in the proof of Lemma~\ref{lem:cutwidth-partitions}. A closer examination of the proofs of Lemmata~\ref{lem:trans-partitions} and~\ref{lem:bad} shows that bounds given by them are essentially optimal on their own; yet, it is not clear whether the bound given by pipelining them is optimal as well. Hence, we would like to pose the following open problem: is the number of $k$-cuts of a semi-complete digraph on $n$ vertices and of cutwidth at most $k$ bounded by $2^{O(\sqrt{k})}\cdot n^{O(1)}$? If the answer to this combinatorial question is positive, then the $\sqrt{\log k}$ factor could be removed.

\bibliographystyle{siam}
\bibliography{cutwidth}

\begin{thebibliography}{10}

\bibitem{AlonLS09}
{\sc N.~Alon, D.~Lokshtanov, and S.~Saurabh}, {\em Fast {FAST}}, in ICALP,
  vol.~5555 of LNCS, Springer, 2009, pp.~49--58.

\bibitem{BangG089_book}
{\sc J.~Bang-Jensen and G.~Gutin}, {\em Digraphs}, Springer Monographs in
  Mathematics, Springer-Verlag London Ltd., London, second~ed., 2009.
\newblock Theory, algorithms and applications.

\bibitem{ChinnCDG82}
{\sc P.~Z. Chinn, J.~Chv\'{a}talov\'{a}, A.~K. Dewdney, and N.~E. Gibbs}, {\em
  The bandwidth problem for graphs and matrices --- a survey}, J. Graph Theory,
  6 (1982), pp.~223--254.

\bibitem{ChudnovskyFS2011}
{\sc M.~Chudnovsky, A.~Fradkin, and P.~Seymour}, {\em Tournament immersion and
  cutwidth}, J. Comb. Theory Ser. B, 102 (2012), pp.~93--101.

\bibitem{ChudnovskySS2011}
{\sc M.~Chudnovsky, A.~Scott, and P.~Seymour}, {\em Vertex disjoint paths in
  tournaments}, 2011.
\newblock Manuscript.

\bibitem{ChudnovskyS11}
{\sc M.~Chudnovsky and P.~D. Seymour}, {\em A well-quasi-order for
  tournaments}, J. Comb. Theory, Ser. B, 101 (2011), pp.~47--53.

\bibitem{DiazPS02}
{\sc J.~D\'{\i}az, J.~Petit, and M.~J. Serna}, {\em A survey of graph layout
  problems}, ACM Comput. Surv., 34 (2002), pp.~313--356.

\bibitem{erdos}
{\sc P.~Erd\H{o}s}, {\em On an elementary proof of some asymptotic formulas in
  the theory of partitions}, Annals of Mathematics (2), 43 (1942),
  pp.~437--450.

\bibitem{Feige09}
{\sc U.~Feige}, {\em Faster {FAST} ({F}eedback {A}rc {S}et in {T}ournaments)},
  CoRR, abs/0911.5094 (2009).

\bibitem{clustering}
{\sc F.~V. Fomin, S.~Kratsch, M.~Pilipczuk, M.~Pilipczuk, and Y.~Villanger},
  {\em Subexponential fixed-parameter tractability of cluster editing}, CoRR,
  abs/1112.4419 (2011).
\newblock To appear in the proceedings of STACS 2013.

\bibitem{my}
{\sc F.~V. Fomin and M.~Pilipczuk}, {\em Jungles, bundles, and fixed parameter
  tractability}, in Proceedings of the 24th ACM-SIAM Symposium on Discrete
  Algorithms (SODA), SIAM, 2012, pp.~396--413.

\bibitem{fradkin-seymourEDP}
{\sc A.~Fradkin and P.~Seymour}, {\em Edge-disjoint paths in digraphs with
  bounded independence number}, 2010.
\newblock Manuscript.

\bibitem{fradkin-seymour}
\leavevmode\vrule height 2pt depth -1.6pt width 23pt, {\em Tournament pathwidth
  and topological containment}, 2011.
\newblock Manuscript.

\bibitem{ghosh2012faster}
{\sc E.~Ghosh, S.~Kolay, M.~Kumar, P.~Misra, F.~Panolan, A.~Rai, and
  M.~Ramanujan}, {\em Faster parameterized algorithms for deletion to split
  graphs}, in Proceedings of the 13th Scandinavian Symposium and Workshops on
  Algorithm Theory (SWAT), vol.~7357 of Lecture Notes in Computer Science,
  Springer, 2012, pp.~107--118.

\bibitem{hardy-ramanujan}
{\sc G.~H. Hardy and S.~Ramanujan}, {\em Asymptotic formulae in combinatory
  analysis}, Proceedings of the London Mathematical Society, s2-17 (1918),
  pp.~75--115.

\bibitem{KarpinskiS10}
{\sc M.~Karpinski and W.~Schudy}, {\em Faster algorithms for feedback arc set
  tournament, {K}emeny rank aggregation and betweenness tournament}, in
  Proceedings of the 21st International Symposium on Algorithms and Computation
  (ISAAC), vol.~6506 of Lecture Notes in Computer Science, Springer, 2010,
  pp.~3--14.

\bibitem{kim-seymour-minors}
{\sc I.~Kim and P.~Seymour}, {\em Tournament minors}, CoRR, abs/1206.3135
  (2012).

\bibitem{ja}
{\sc M.~Pilipczuk}, {\em Computing cutwidth and pathwidth of semi-complete
  digraphs via degree orderings}, CoRR, abs/1210.5363 (2012).
\newblock To appear in the proceedings of STACS 2013.

\end{thebibliography}

\end{document}